\newcommand{\be}{\begin{equation}}
\newcommand{\ee}{\end{equation}}
\newcommand{\ba}{\begin{aligned}}
\newcommand{\ea}{\end{aligned}}
\newcommand{\bea}{\begin{eqnarray}}
\newcommand{\eea}{\end{eqnarray}}
\newcommand{\Res}{\mathop{\,\rm Res\,}}
\newtheorem{definition}{Definition}[section]
\newtheorem{proposition}{Proposition}[section]
\newtheorem{corollary}{Corollary}[section]
\begin{document}

\title{{\bf Hopf Algebraic Structure for Tagged Graphs and Topological Recursion}}
\author{\; Xiang-Mao Ding$^{1,2}$\footnote{\small
Email: xmding@amss.ac.cn},\;  Yuping Li$^1$\footnote{\small
Email: liyuping@amss.ac.cn},\; Lingxian Meng$^{3,1}$\footnote{\small Email: menglingxian@amss.ac.cn}
\\{\it ~$^1$ Institute of Applied Mathematics, Academy of
}\\ {\it  Mathematics and Systems Science; Chinese Academy of Sciences,}
\\{\it ~$^2$ Hua Loo-Keng Key Laboratory of Mathematics, Chinese Academy of Sciences}
\\{\it Beijing 100190, People's Republic of China}\\{\it ~$^3$College of Mathematics and Information Science,}\\{\it Zhengzhou University of Light Industry}\\{\it Zhengzhou 450002, People's Republic of China}\date{}}


\maketitle

\begin{abstract}
Using the shuffle structure of the graphs, we introduce a new kind of the Hopf algebraic
structure for tagged graphs with, or without loops. Like a quantum group structure, its
product is non-commutative. With the help of the Hopf algebraic structure, after taking
account symmetry of the tagged graphs, we reconstruct the topological recursion on spectral
curves proposed by B. Eynard and N. Orantin, which includes the one-loop equations of
various matrix integrals as special cases.
\vspace{0.3cm}

\noindent {\bf Keywords:}~ topological recursion, Hopf algebra, matrix integral.

\vspace{0.3cm}

\end{abstract}

\section{Introduction }

Correlation functions are basic quantities to be defined in a lot of physical systems. For a matrix model, the zero-dimensional quantum system, the correlation functions are given by various matrix integrals. As observed
independently by G. 't Hooft, E. Brezin and C. Itzykson etc  \cite{Gh, BIPZ},
the integrals contain abundant useful information, which is important either in mathematics, or physical research and applications. The quantities of matrix integrals admit a topological expansion, in certain cases, they enumerate discrete colored surfaces for a given topology, while the well known Kontsevich integrals count spin intersection numbers. In appropriate double scaling limits, some integrals involve in a Liouville theory. Hence the study of various matrix integrals is one of the most important research fields in mathematical physics.

Various sorts of working approaches have been set up in the past few decades to deal with the matrix integrals, such as orthogonal polynomials method \cite{FGJ}, loop equations \cite{JLCY} and topological string method \cite{DV1, DV2}. Among them, the loop equation method is one of the most effective ways at present in some sense. The loop equations are known as well as the Schwinger-Dyson equations for certain cases. The original form of the loop equations are emerged in the study of the Yang-Mills gauge theory \cite{Ym1,Ym2}. Then similar equations are obtained for cases of matrix model. In fact, In the matrix model cases, the loop equations are the functional identities for matrix integrals. The loop equations for the hermitian matrix integrals are given in  \cite{Sr}. For a long period of time, the loop equation is a key step in the study of matrix integrals in the double scaling limit. In order to solve the problems which are away from the double scaling limit, various kinds algorithmic methods for the calculating the higher genus contributions were suggested in the following decades. In practice, if the genus zero equation was solved, then the higher-genus equations were iteratively solved genus by genus by using the results of previous steps. Even though these algorithms were ingenious, nevertheless, only several lower order loop equations could be solved. It is very difficult to compute the results concretely for higher-genus cases.

A new type of loop equations was proposed by B. Eynard for one-hermitian matrix integrals \cite{Be}. After then, it was generalized to 2-matrix model case in the refs. \cite{EO2,CEO}. In contract to the original loop equation, the new version considers correlation functions on an algebraic curve given by the one-loop equation. The new approach combines elements which is depending on the algebraic curve.  It allows us to compute the integrals by parts on the algebraic curve recursively, therefore the solutions have plentiful mathematical structures. The method can be applied to other problems which are irrelevant to matrix integrals as well. This implies that the structures of the solutions do not depend on the matrix integrals. In the ref. \cite{EO} B. Eynard and N. Orantin constructed the so called topological recursion structure for spectral algebraic curves. The topological recursion is a recursive defined identities, which associates a double indexed family differential forms ${\omega}_{g,n}$, with the two integers $g$ and $n$ being non-negative. The ${\omega}_{g,n}$ are called the invariants of the spectral curve. Those invariants enjoy lots of fascinating properties, one could see ref. \cite{EO} for more details. If an algebraic curve is identical to the one-loop equation of certain matrix integrals, then these new constructed correlation functions coincide with the those given by matrix integral.

The expression for perturbative expansion in a quantum field theory(QFT) has a similar structure as the topological recursion. The perturbative expansion in QFT can be represented by the Feynman diagrams. There are two kinds of Feynman diagrams: tree diagrams and loop ones. In the Feynman diagrams approach for a quantum field theory, the propagators are represented by lines, and in addition,  interactions are represented by vertices. It was found that there was a Hopf algebraic structure hidden in the relations of divergences in the expansions of amplitudes \cite{Dk1, Dk2}. The Hopf algebra was graded according to their sub-divergences. Moreover, regulation and normalization of the divergences were compatible with the Hopf algebraic structure as well \cite{CK1, CK2}. In terms of algebraic language, these procedures were the Birkhoff decomposition, either the Schwinger-Dyson equation or the renormalization group equation can be restated in this terminology. With the help of the Hochschild cohomology and the Schwinger-Dyson equation, the relationships of the contributions of a single graph to the full correlation functions were established. In addition with the Hopf algebra, the computations of amplitudes can be simplified to a certain extent, even non-perturbative results could be obtained in some cases.

Similar to the perturbative expansion in quantum field theory, the topological recursion on an algebraic curve has a diagrammatic representation. The diagrammatic representation for the topological recursion is the Feynman diagram alike, although it is not a Feynman diagram.
The main disparity between them is that, the diagrammatic representation of topological
recursion is arrowed.

It is straightforward to ask the question, whether there is a Hopf algebraic structure on the topological recursion.  Indeed, there is a Hopf algebraic structure on the tree diagrammatic representation with the help of the Loday-Ronco Hopf algebra  \cite{Est}. However, that structure cannot be extended to graphs with loops.

In this manuscript we present a Hopf algebraic structure on the topological recursion.
Distinct from the Loday-Ronco Hopf algebra, we redefine the product and the coproduct on planar binary trees with tagged leaves, and a new Hopf algebra is formed, and more remarkably, the new Hopf algebraic structure can be generalized to graphs with loops. It
seems that the numbers of the tagged diagrams are not the same as the diagrammatic representations of the topological recursion. However, If we consider the symmetries
of the tagged diagrams, these graphs are just the diagrammatic representations of topological recursion. With the help of the new Hopf algebraic structure on the tagged graphs with loops,
and in addition with the weighted map defined by B. Eynard and N. Orantin as well  \cite{EO}, we can obtain a Hopf algebraic structure on the topological recursion from its diagrammatic representation.

The manuscript is arranged as follows. In section 2, we introduce basic notations and briefly review the main results of Eynard-Orantin's topological recursion in ref. \cite{EO}. We briefly present its diagrammatic representation. In the next section, we recall the Loday-Ronco Hopf algebra defined on planar binary trees. In section 4, we prove the equivalence between the diagrammatic representation of the topological recursion and the graphs given in  \cite{Est}. The main part of the manuscript is in the following section. We give a new Hopf algebraic structure on planar binary trees. It is straightforward to extend our structures to the graphs with loops. Using the weighted map, we really get a Hopf algebraic structure on the topological recursion. In section 6, conclusions and further discussions are followed.

\section{Topological recursion and diagrammatic representations}

The primary version of loop equations derived by B. Eynard gives an effective recursive way to solve formal one-hermitian matrix integrals  \cite{Be}, the solutions depend on the one-loop equation rather than formal matrix integrals. Then B. Eynard and N. Orantin generalized the results for one-loop equation to cases for algebraic curves, on which they constructed a sequence of differential forms, or geometric invariants named as correlation functions, furthermore a sequence of complex numbers regarded as partition functions by the topological recursion, respectively. Here, we review the basic notations and main results of  \cite{EO}, and for the limit of space, we only focus on the correlation functions. The
argument can be applied to the cases of partition functions.

Let us consider a compact Riemann surface $\Sigma$ with genus $g$, $x$ and $y$ are meromorphic functions on $\Sigma$. For a local variable $p$, $dx(p)=0$ gives the branch points $\{a_i|i=1,...,n\}$. Assume that all the branch points $\{a_i|i=1,...,n\}$ are simple.

On the Riemann surface $\Sigma$, basic ingredients in the correlation functions can be defined. Consider any point $p\in\Sigma$ and points $q,\ \overline q$ in the vicinity of a
fixed branch point, such that $x(q)=x(\bar q)$. The are two types of basic ingredients for the topological recursion:  one is ``vertex" , and another is the Bergmann kernel \cite{EO}.
The ``vertex" is defined as:
\begin{equation}
\omega(q)=(y(q)-y(\overline q))dx(q).
\end{equation}

\noindent While the Bergmann kernel $B(p,q)$ is a symmetric 2-form on $\Sigma\times\Sigma$. There is exactly one double pole at $p=q$ with no residue. For a canonical basis of cycles $(A_I,B^I)$, such that
$\int_{A_I}B=0$.

Given a set of points $\{p_{1},p_{2},\cdots,p_{k}\}$ of the curve, if $J=\{i_1,i_2,\cdots,i_j\}$ is any a subset of $K=\{1, 2,\cdots, k\}$, denote $P_{J}=\{p_{i_1},p_{i_2},\cdots,p_{i_j}\}$. Then the $(k+1)$-point correlation function to order $g$ has the
following combinational structure:
\begin{equation}\label{defWk}
\begin{aligned}
& W_{k+1}^{(g)}(p,P_{K}) \\
&=\Res_{q\to a} K(q,p)\,\Bigl(\sum_{m=0}^g \sum_{J\subset
K} W_{|J|+1}^{(m)}(q,P_J)W_{k-|J|+1}^{(g-m)}(\overline{q},P_{K/J}) +
W_{k+2}^{(g-1)}(q,\overline{q},P_K) \Bigr)
\end{aligned}
\end{equation}
\\ where
\begin{equation}
K(q,p)={{1\over 2}\int_q^{\bar q}B(\xi,p)\over \omega(q)}
\end{equation}
the integration path in $K(q,p)$ is chosen in the neighborhood of a branch point.
The recursive basis are
\begin{equation}
W_k^{(g)}=0 \quad {\rm
for}\,\, g<0
\end{equation}
\begin{equation}
W_1^{(0)}(p) = 0
\end{equation}
\begin{equation}
W_2^{(0)}(p_1,p_2)
= B(p_1,p_2)
\end{equation}

\noindent It can be proved that the function $W_{k+1}^{(g)}(p,p_1,\dots,p_k)$ is symmetric  with respect to its variables \cite{EO}.

The correlation functions, as well as the partition functions defined on an algebraic curve have graphic representations. B. Eynard and N. Orantin introduced a set of graphs
$\mathscr{G}_{k}^{g}(p,p_1,\dots,p_k)$(In \cite{EO}, it was denoted as $\mathscr{G}_{k+1}^{g}(p,p_1,\dots,p_k)$). Pay your attention that the formulation is similar to the Feynman diagram expression in quantum field theories, however it is not the Feynman diagram.
These graphs have two indices $k,\ g$, such that $k\geq 0,g\geq 0$ and $k+2g\geq 3$.
For a graphic representations, $\mathscr{G}_{k}^{g}(p,p_1,\dots,p_k)$ is a set of connected trivalent graphs which are subjected to the following three conditions:

(I),  there are $2g+k-1$ vertices which are the trivalent vertices, there are $k$ leaves which are the 1-valent vertices labelled with $p_1,\dots, p_k$, and there is one root which is the 1-valent vertex labelled by $p$.

(II), there are $3g+2k-1$ edges, among them $2g+k-1$ edges are arrowed, and $k+g$ are non-arrowed. The arrowed edges form a skeleton tree with the root $p$. The arrows from root to leaves induce a partial order $\mathscr{R}$ of the vertices. Among the non-arrowed
edges,  there is $k$ ones from a vertex to a leaf which are named as outlines, and the
remaining $g$ edges connect the pairs of vertices with the $\mathscr{R}$ relations.

(III), for two edges, if one is arrowed,  another one is non-arrowed, and they have a common vertex, furthermore, the non-arrowed edge links this vertex with its descendant. Then we set the arrowed edge to be the left child, and the non-arrowed one be the right child with labeling by a black dot in the graph.

We denote the set of all graphs of the correlation functions by
$$\mathscr{G}\equiv \cup_{k+2g\geq 3,k\geq 0,g\geq
0}\mathscr{G}_{k}^{g}(p,p_1,\dots,p_k).$$
\noindent The graphs in
$\mathscr{G}$ are relevant to the correlation functions by the weighted map given by B. Eynard and N. Orantin \cite{EO}. The ingredients in the correlation functions are
expressed as basic elements of the graphs, the weighted map $\phi$ is defined as:
$\phi (\begin{array}{r}
{\epsfxsize 2cm\epsffile{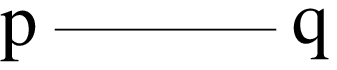}}
\end{array})=B(p,q)$;
$\phi (\begin{array}{r} {\epsfxsize 3cm\epsffile{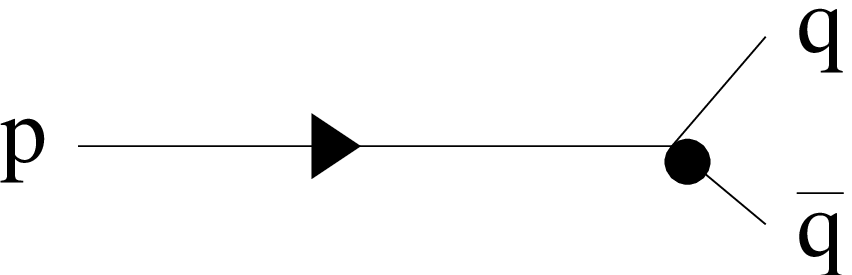}}
\end{array})=K(p,q)$;
and take a residue $q\longrightarrow \{a_i\}$ at each vertex $q$, and sum over all the branch points along the arrow from leaves to the root. Under the weighted $\phi$ they proved the relationship of the graphs and the correlations \cite{EO}:
\begin{equation}\label{rules}
W_{k+1}^{(g)}(p,p_1,\dots,p_k)=\phi(\sum_{G\in
\mathscr{G}_{k}^{g}(p,p_1,\dots,p_k)} G)=\sum_{G\in
\mathscr{G}_{k}^{g}(p,p_1,\dots,p_k)}\phi(G)
\end{equation}

For a given vertex in $\mathscr{G}$, if we exchange the left children with the right ones, the images under $\phi$ are intact. This will be encoded into symmetry factors. By equation (\ref{rules}), the
elements of $W_{k+1}^{(g)}(p,p_1,\dots,p_k)$ are one-to-one
corresponding to the graphs in
$\mathscr{G}_{k}^{g}(p,p_1,\dots,p_k)$. Hence we will not distinguish
them from now on. By the definition of $\mathscr{G}$, such diagrammatic representation $\mathscr{G}_{0}^{2}$ of $W_{1}^{(2)}$
contains the graphs given in Figure 1.
\begin{figure}\label{g12new}
\centering $
\begin{array}{r}
{\epsfxsize 10cm\epsffile{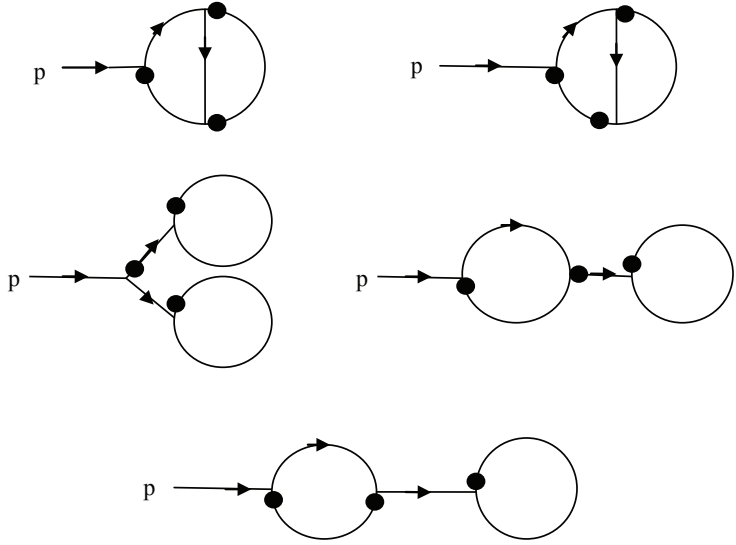}}
\end{array} $
\caption{Graphs in $\mathscr{G}_{0}^{2}$ and the $\bullet $ denotes
the right child.}
\end{figure}

\section{The Loday-Ronco Hopf algebra of planar binary trees}

In this section, we review the Loday-Ronco Hopf algebra of planar binary trees. Details and proofs can be found in  \cite{LR}.

\subsection{The Shuffle algebra over the symmetric groups }

Let $S_n$ be a set of permutations of order $n$. It is a symmetric group acting on $\{1,\cdots, n\}$. The trivial group $S_0$ is nothing but the unit element $1$. An element $\rho\in S_n$ is denoted by its image $(\rho(1)\cdots \rho(n))$. For a field $\mathbb{K}$, denoting the group algebra as $\mathbb{K}[S_n]$ which is generated by the $S_n$ over $\mathbb{K}$. Furthermore, we set the disjoint union of $S_n$ for $n\geq 0$ as $S_{\infty}$, hence

$$\mathbb{K}[S_{\infty}]=\oplus_{n\geq 0}\mathbb{K}[S_n].$$

\noindent It is a graded vector space. An $(n,m)-$shuffle is defined as an element $\sigma\in S_{n+m}$ such that

\bea
&\sigma(1)<\sigma(2)<\cdots<\sigma(n)\nonumber\\
&\sigma(n+1)<\sigma(n+2)<\cdots<\sigma(n+m).
\eea

\noindent  The set of all $(n,m)-$shuffles is denoted by $Sh_{n,m}$. For any permutations $\rho\in S_m$ and $\sigma\in S_n$, the product $\rho\times\sigma$ on $S_{\infty}$ is the  permutation in $S_{n+m}$, which is obtained by limit the $\rho$
action on the former $m$ variables, and $\sigma$ action on the remain $n$ variables.

There exists a graded Hopf algebraic structure on $\mathbb{K}[S_{\infty}]$. For $x\in \mathbb{K}[S_n]$, $y\in \mathbb{K}[S_m]$, the product is defined as

\begin{equation}
\label{perprod} x*y= \sum_{\alpha_{n,m}\in Sh_{n,m}}\alpha_{n,m}\circ(x\times y).
\end{equation}

\noindent For any one $\sigma\in \mathbb{K}[S_n]$, the coproduct can be defined as

\begin{equation}
\label{permcop} \Delta(\sigma)=\sum_{i=0}^n\sigma_i\otimes
\sigma'_{n-i}.
\end{equation}

\noindent  where $\sigma_i\in \mathbb{K}[S_i],\sigma'_{n-i}\in
\mathbb{K}[S_{n-i}]$ such that

\begin{equation}\label{codecom}
\sigma=(\sigma_i\times\sigma'_{n-i})\circ
\omega^{-1},
\end{equation}

\noindent where $\omega$ is a $(i,n-i)-$shuffle. For a given element $\sigma\in \mathbb{K}[S_n]$ and a number $i$, it can be
proved that there exists a unique $(i,n-i)-$shuffle $\omega$ and $\sigma_i\in
\mathbb{K}[S_i], \sigma_{n-i}\in \mathbb{K}[S_{n-i}]$, such that the identity
(\ref{codecom}) is held \cite{LR}. By using the definition of product $*$
and coproduct $\Delta$, $\mathbb{K}[S_{\infty}]$ becomes a bialgebra,
moreover, it is not only graded but also connected, therefore $\mathbb{K}[S_{\infty}]$ is a Hopf algebra \cite{MR}.

\subsection{Planar binary trees and Hopf algebra}

A planar binary tree $t$ is a special kind of the binary tree, which is an oriented graph on a plane without loops, all of its internal vertices
are trivalent. Let $|t|$ denote the degree of a planar binary tree $t$, which is the number
of trivalent vertices in $t$. For every planar binary tree, there is a root, which is a particular edge with 1-valent vertex, the rest edges with 1-valent vertex are leaves. Obviously, the tree $t$ has $|t|+1$ leaves. Following the notation in  \cite{LR}, we denote the set of planar binary trees with $n$ vertices by $Y_n$. The disjoint union of $Y_n$ for $n\geq 0$ is denoted by $Y_{\infty}$. Similarly, $\mathbb{K}[Y_{\infty}]=\oplus_{n\geq 0}\mathbb{K}[Y_n]$.

For simplicity, a planar binary tree with levels is mentioned as a planar binary tree $t$, in which every internal vertex is assigned to one of the horizontal lines. The horizontal lines are entangled from top to bottom by the numbers $\{1,\cdots,|t|\}$. Hence, each vertex has a unique number. By this way, the level of the vertex attached to the root is $|t|$. For simplicity, we denoted them as $\tilde Y_n$, for the set of all planar binary trees which levels are $n$.

For a planar binary tree with levels, if we order its vertices from the left to right and from the top to bottom, then it gives a permutation \cite{Est}. The inverse is established in the same way. In other word, $\tilde Y_n$ is one-to-one corresponding to $S_n$ \cite{LR}. One can try out the proof in ref. \cite{LR}.

In order to define the Hopf algebra over $\mathbb{K}[Y_{\infty}]$, a grafting operator has not defined yet. In fact, for any two trees $t_1$ and $t_2$, the grafting $t_1\vee t_2$ fuse into a new tree $t$ with the left branch $t_1$ and the right one $t_2$,Therefore $|t|=|t_1|+|t_2|+1$. If we denote the one-leaf tree without trivalent vertices by $|$. And any picked tree can be expressed by two branches with a grafting operator.

The bijection $\tilde Y_n\longrightarrow S_n$ and the forgetful map(forget the levels of vertices) $\tilde Y_n\twoheadrightarrow Y_n$ induce a linear map $\psi_n:
S_n\twoheadrightarrow Y_n$. Hence it is defined from $\mathbb{K}[S_n]$ to $\mathbb{K}[Y_n]$ as well. The linear dual maps $\psi_n^*: \mathbb{K}[Y_n]\longrightarrow
\mathbb{K}[S_n]$ are inclusion maps for $n\geq 0$. For instance, if $t\in Y_n$, define the set $Z_t=\{\sigma\in S_n|\psi_n(\sigma)=t\}$. One has $\psi_n^*(t)=\sum_{\sigma\in Z_{t}}\sigma$. Their disjoint union is a graded linear map $\psi^*: \mathbb{K}[Y_{\infty}]\longrightarrow \mathbb{K}[S_{\infty}]$.  J. L. Loday and M. O. Ronco proved that the image of
$\psi^*$ is a Hopf subalgebra of $\mathbb{K}[S_{\infty}]$ \cite{LR}. The product $*$ on $\mathbb{K}[Y_{\infty}]$ satisfies the formula \cite{LR}
\be
\label{plaprod} T*T'=T_1\vee(T_2*T')+(T*T'_1)\vee T'_2 ,
\ee

\noindent where $T$ and $T'$ are arbitrary trees in $\mathbb{K}[Y_{\infty}]$ such that $T=T_1\vee
T_2$ and $T'=T'_1\vee T'_2$. The starting equation of the recursive series is

\be
|*T=T=T*|.
\ee
\noindent The coproduct $\Delta$ on $\mathbb{K}[Y_{\infty}]$ surely satisfies a series of recursive formulas \cite{LR}. For a selcted tree $T=T_1\vee T_2\in Y_{n+m+1}$ with $|T_1|=n$ and $|T_2|=m$,

\be
\label{placop}
\Delta(T)=\sum_{j,k}(T_{1,j}*T_{2,k})\otimes(T'_{1,n-j}\vee
T'_{2,m-k})+T\otimes |,
\ee

\noindent where $\Delta(T_1)=\sum_jT_{1,j}\otimes
T'_{1,n-j}$ and $\Delta(T_2)=\sum_kT_{2,k}\otimes T'_{2,m-k}$.

The two operators product $*$ and coproduct $\Delta$ are obtained by restricting the Hopf algebra on $\mathbb{K}[S_{\infty}]$ to
$\mathbb{K}[Y_{\infty}]$. J. L. Loday and M. O. Ronco proved
they are internal in $\mathbb{K}[Y_{\infty}]$, so the image of
$\psi^*$ is a Hopf subalgebra \cite{LR}.

\section{Equivalence between two kinds of graphs}

We have mentioned two kinds of graphs, one is the diagrammatic
representation of the topological recursion, another one is the planar
binary tree. If we connect two consecutive leaves on a planar binary tree,
it will generates a graph with one loop. These graphs with loops satisfy the
topological recursion formulas \cite{Est}, in fact, they are identical
with the diagrammatic representation of the topological recursion.

If we consider the diagrammatic representation of genus $0$ correlation functions on the topological recursion. For a fixed number $n$ of internal vertices, the cardinality of $\mathcal {G}_{n+2}^0$ is given by the Catalan number $C_n={(2n)!\over n!(n+1)!}$. It is equal to the dimension of $\mathbb{K}[Y_n]$. Intuitively, there is a natural one-to-one
correspondence between them \cite{Est}. The $n+2$ points, genus $0$
correlation function $W_{n+2}^0(p,p_1,\cdots,p_{n+1})$ satisfies the following
equation
\be\ba\label{genus0tree} \phi^{-1}(W_{n+2}^0(p, p_1,\cdots, p_{n+1}))=&\sum\limits_{\scriptstyle t_i\in Y_n \atop
\scriptstyle \text{perm. of leaf labels $\{p_1,\cdots,
p_{n+1}\}$}}t_i=\sum\limits_{\scriptstyle
p+q+1=n,\atop\scriptstyle |t_{1}|=p, |t_{2}|=q}t_{1}\vee t_{2} \\
+&\text{\bf perm. $\{p_1,\cdots, p_{n+1}\}$}, \ea\ee where
$\phi^{-1}$ is the inverse of B. Eynard and N. Orantin's weighted map $\phi$.
The proposition can be proved by the induction on $n$.
It is only to reorganize of the facts in  \cite{Est}.

If one denote by $(Y_n)^g$ the set of distinct graphs with $g$ loops, which are obtained by successively action on $(_{i}\leftrightarrow_{i+1})$ for $g$ times on trees in $Y_n$, then,
similar to genus $0$ case, there is the following equation
\be
\ba
\label{highgenustree}
&\phi^{-1}(W_k^g(p,p_1,\cdots,p_{k-1}))=\sum\limits_{\scriptstyle
t^g\in (Y_n)^g\atop \scriptstyle \text{perm. $\{p_1,\cdots,
p_{k-1}\}$}}t^g \\=&[\sum\limits_{i=0}^g\sum\limits_{\scriptstyle
(t_1)^i\in (Y_p)^i, \ (t_2)^{g-i}\in(Y_q)^{g-i}\atop \scriptstyle
p+q+1=n}(t_1)^i\vee(t_2)^{g-i}
+\sum\limits_{i=0}^{g-1}\sum\limits_{\scriptstyle (t_1)^{g-i-1}\in
(Y_p)^{g-i-1}, \ (t_2)^i\in(Y_q)^i \atop \scriptstyle
p+q+1=n}(t_1)^{g-1-i}\smallsmile (t_2)^{i}]\\&+\text{\bf perm.
$\{p_1,\cdots, p_{k-1}\}$},
\ea
\ee

\noindent in which $n=2g+k-2$, $t=t_1\vee t_2$ and
$t_1\smallsmile t_2$ means the identification between the rightmost leaf in $t_1$ to the leftmost leaf in $t_2$. Hence, $(Y_n)^g=\mathscr{G}_{k-1}^g$.

In order to prove equation (\ref{highgenustree}) is held, we start to set
$g=1$ and make the induction on the number $n$, then make an induction on genus $g$ case. The procedure is only to rearrange of the facts in  \cite{Est}. Equations (\ref{genus0tree}) and (\ref{highgenustree})
imply that the planar trees with their contractions are identified with the diagrammatic representation of the topological recursion.

In order to obtain graphs with loops from planar binary trees, J. N. Esteves defined a contraction \cite{Est}. Given any planar binary tree $t\in Y_n$, label its leaves from left to right by $p_1,\cdots,p_{n+1}$ in turns. The action $(_{i}\leftrightarrow_{i+1})$ on $t$ means attaching an edge to the consecutive leaves $p_i$ and $p_{i+1}$, then relabeling the remaining leaves by $p_1,\cdots, p_{n-1}$. It should be noticed that the only action is the contraction for two the
nearest neighbor leaves.

Based on this equivalence, the Hopf algebra on planar binary trees  can gives a Hopf algebraic structure on the diagrammatic representations of correlation functions with genus $0$. Unfortunately, the structure for the genus $0$ functions cannot be generalized to higher genus cases. In next section, a new Hopf algebra on planar binary trees will be given, while this accompany algebraic structure can be naturally generalized to graphs with loops.

\section{The Hopf algebra of topological recursion}

It is known that, the Loday-Ronco Hopf algebra on planar binary trees cannot
be extended to graphs with loops generated by contractions, in the main, the operator
as of the Loday-Ronco Hopf algebra keep the number of vertices, meanwhile they
change the number of leaves, and the product and coproduct
are not well defined on the graphs with no leaves.

In this section, we give a new form of Hopf algebra on graph with tagged leaves, it can be well defined on graphs with loops after leaves contractions as well. In this way, the planar binary trees is a special case of the graph with tagged leaves without any contractions.
Based on the equivalence of the diagrammatic representation of the topological recursion
and the graph with tagged leaves, as well as using the weight map, in fact, the new Hopf
algebra gives a Hopf algebraic structure on the topological recursion.

In the following, we will give a bracket representation of planar binary trees with tagged leaves in subsection \ref{section5.1}. In the bracket representation, a new Hopf algebra is defined in the
next subsection. Finally, we generalize it to graphs with tagged leaves and loops, and derive the relationship between coproduct and the topological recursion.

\subsection{Bracket representations of planar binary trees}\label{section5.1}

At first, In this subsection, we consider the case of planar binary tree, a special kind of graph without
loop. Let $t$ be a planar binary tree with tagged leaves in $Y_n$. Its leaves are labelled by $1,\cdots, n+1$(or other distinguishable labels) from left to right. We can denote $t$ by a string of numbers with apex angles. We list the numbers $1\cdots n+1$ in succession, then add an apex angle for two adjacent numbers if their corresponding leaves are on the same branch. If regarding an apex angle as a leaf, we keep on the way to add apex angles till that all the numbers are in one apex angle. For example in Figure \ref{fig3} the tree
is $\langle \langle 12\rangle \langle
3\langle 45\rangle \rangle \rangle $.
\begin{figure}
\centering $
\begin{array}{r}
{\epsfxsize 4cm\epsffile{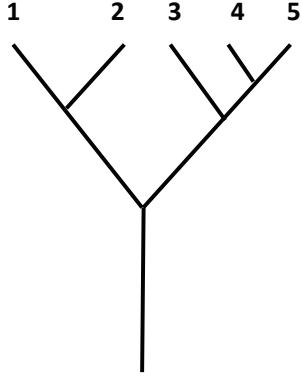}}
\end{array} $
\caption{\label{fig3}planar binary tree $\langle \langle 12\rangle \langle
3\langle 45\rangle \rangle \rangle $ }
\end{figure}

Obviously, any planar binary tree has a unique expression with apex angles. We call these expressions as the bracket representations of planar binary trees. Actually, it can be regarded as a dual representation of the permutation group representation. In fact, from the permutation of a tree $t$, we can get the expression with apex angles directly. If we denote $t=(i_1\cdots i_k)$, its expression with apex angles can be achieved in the following procedure. We list $p_1\cdots p_{k+1}$ in succession, with corresponding to their leaf labels. Then any interval between two labels in $p_1\cdots p_{k+1}$ corresponds a certain number in $(i_1\cdots i_k)$ from left to right. Then we add $k$ apex angles in an arrangement in accordance with the order of the natural numbers in $(i_1\cdots i_k)$. An apex angle which has added is viewed as a new leaf while adding the next pair apex angles.

For an example, assume that $t=(231)$. It has four leaves denoted by $p_1,\ p_2,\ p_3,\ p_4$ from left to right. The number $1$ in $(231)$ corresponds to the interval between $p_3$ and $p_4$, then we add the first apex angle on $p_3p_4$, i.e. $p_1p_2\langle p_3p_4\rangle$. The number $2$ in $(231)$ corresponds to the interval between $p_1$ and $p_2$, hence we add the second apex angle to be $\langle p_1p_2\rangle \langle p_3p_4\rangle$.  At last, the number $3$ is matched with the interval between $p_2$ and $p_3$, i.e. $\langle p_1p_2\rangle$ and $\langle p_3p_4\rangle$. We add the third apex angle to become $\langle \langle p_1p_2\rangle \langle p_3p_4\rangle \rangle $.

Permutations are one-to-one correspondence to planar binary trees with levels, expressions with apex angles are bijective with planar binary trees. Hence, one permutation has a unique expression with apex angles. But one expression with apex angles
can correspond to distinct permutations in general. An example, permutations $(231)$ and $(132)$ have the same expression $\langle \langle 12\rangle \langle 34\rangle \rangle$.

Let $X_n(n\geq 2)$ be a set of digital expressions with apex
angles matching to the following conditions: in any pair of apex angles, there are two elements, a number and a pair of apex angle in a whole viewed as an element; and the whole expression is in a pair of apex angles. Particularly, $X_0=\{\emptyset\}$; $X_1=\{\langle1 \rangle \}$; $X_2=\{\langle 12\rangle \}$. Among them, $\langle1
\rangle$ represents a planar binary tree without trivalent vertices
$|$, $\langle12 \rangle$ stands a planar binary tree with two
leaves, and one trivalent vertex. Graphs with different symbol system have the relationship $X_{n+1}=Y_n=\mathscr{G}_n$.

If we indicate $\mathbb{K}[X_n]$ the vector space generated by $X_n$ over field $\mathbb{K}$. The disjoint union of the sets $X_n$ for $n\geq 0$ is stated
as $X_{\infty}$. Hence, \be \mathbb{K}[X_{\infty}]=\oplus_{n\geq 0}\mathbb{K}[X_n]. \ee

\noindent It should be noticed that $\mathbb{K}[X_{\infty}]$ is not identical to $\mathbb{K}[Y_{\infty}]$, although both they are generated through planar binary trees. In $\mathbb{K}[X_{\infty}]$, $\emptyset$ is included as a generator. But $\mathbb{K}[Y_{\infty}]$ do not include $\emptyset$, this means that
\be
\mathbb{K}[X_{\infty}]=\mathbb{K}[Y_{\infty}]\oplus \mathbb{K}[\emptyset].
\ee

\noindent In the next subsection, we will prove that there is a Hopf algebraic structure on $\mathbb{K}[X_{\infty}]$, which is different from the known Loday-Ronco structure on $\mathbb{K}[Y_{\infty}]$.

\subsection{Hopf algebraic structure for diagrams without loops}

Operators (\ref{perprod}) and (\ref{permcop}), or the expressions in  (\ref{plaprod}) and (\ref{placop}), they surely keep the numbers of trivalent
vertices on planar binary trees. However, the story for leaf is varied.
The trees in the right sides of equations (\ref{perprod})
or (\ref{plaprod}) are one leaf less than the ones in the left, on the contrary,
the sum on leaves of trees in the right of equations (\ref{permcop}) or (\ref{placop}) are one more than that in the left.

In this subsection, we will introduce a new Hopf algebraic structure on $\mathbb{K}[X_{\infty}]$. The operators are different from the Loday-Ronco action, they keep the number of leaves, by the way, the product increases one trivalent vertex, and while the coproduct reduces one trivalent vertex. More remarkable, these definitions can be generalized to graphs with loops in a straightforward way.

\begin{definition}
For $\rho=\langle\rho_1\rho_2\rangle\in \mathbb{K}[X_n] $ and
$\tau=\langle\tau_1\tau_2 \rangle\in \mathbb{K}[X_m] $, then their product
is defined as \be
\rho\star\tau=\langle\rho\tau\rangle+\langle\langle\rho\star\tau_1
\rangle\tau_2\rangle+\langle\rho_1 \langle\rho_2\star\tau \rangle
\rangle ,\ee where $\rho_1, \rho_2, \tau_1, \tau_2\neq \emptyset$. For the righthand side, we add $n$ to the numbers in $\tau$ for
the self-consistency(if the leaves in $\tau$ are labeled by numbers). For any element $\sigma\in \mathbb{K}[X_{\infty}]$,
there are
$$\sigma\star\langle1 \rangle=\langle\sigma1 \rangle
+\langle\sigma_1\langle \sigma_21\rangle \rangle, $$
$$\langle1
\rangle\star\sigma=\langle1\sigma\rangle+\langle\langle
1\sigma_1\rangle \sigma_2\rangle, $$
and
$$\emptyset\star\sigma=\sigma=\sigma\star\emptyset.$$
\end{definition}

\noindent Some examples are listed below.

\begin{align*} &\emptyset\star\langle
1\rangle=\langle 1\rangle=\langle 1\rangle\star\emptyset;\\
&\langle 1\rangle\star\langle 12\rangle=\langle1\langle23 \rangle
\rangle +\langle\langle12 \rangle3 \rangle;\end{align*}
\noindent In terms of diagrams, it is
\begin{align*}
\begin{array}{r}{\epsfxsize
0.6cm\epsffile{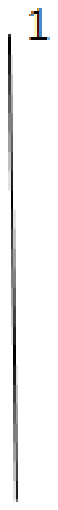}}
\end{array}
\star
\begin{array}{r}
{\epsfxsize 2.2cm\epsffile{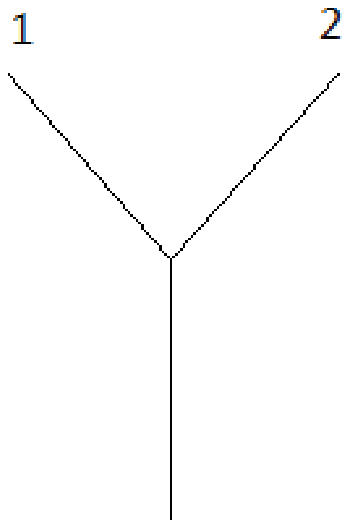}}
\end{array}=
\begin{array}{r}{\epsfxsize
3cm\epsffile{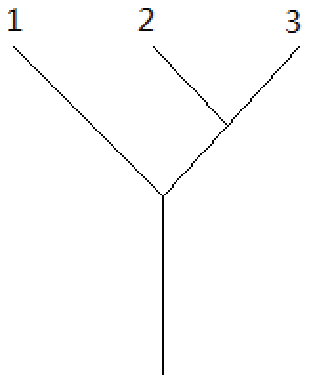}}
\end{array}+\begin{array}{r}{\epsfxsize
3cm\epsffile{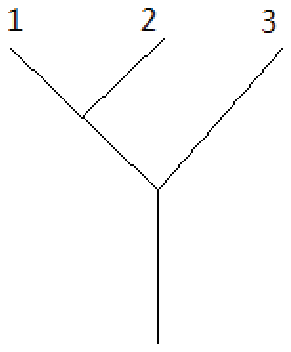}}
\end{array}
\end{align*}
\noindent Similarly
\begin{align*}
\langle12\rangle\star\langle 1\rangle=\langle1\langle23 \rangle \rangle
+\langle\langle12 \rangle3 \rangle;
\end{align*}
\begin{align*}
\begin{array}{r}
{\epsfxsize 2.2cm\epsffile{1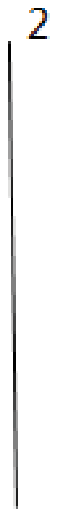}}
\end{array}\star\begin{array}{r}{\epsfxsize
0.6cm\epsffile{1.eps}}
\end{array}=
\begin{array}{r}{\epsfxsize
3cm\epsffile{1-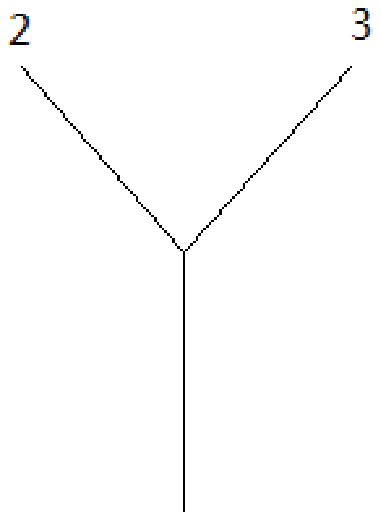}}
\end{array}+\begin{array}{r}{\epsfxsize
3cm\epsffile{12-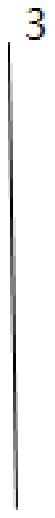}}
\end{array}
\end{align*}
\begin{align*}  \langle \langle 12\rangle 3\rangle \star \langle
1\rangle =\langle \langle \langle 12\rangle 3\rangle
4\rangle+\langle \langle 12\rangle\langle 34\rangle\rangle;
\end{align*}
\begin{align*}
\begin{array}{r}{\epsfxsize
3cm\epsffile{12-3.eps}}
\end{array}
\star
\begin{array}{r}
{\epsfxsize 0.7cm\epsffile{1.eps}}
\end{array}=\begin{array}{r}{\epsfxsize
3cm\epsffile{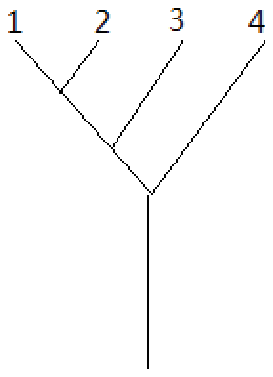}}\end{array}+\begin{array}{r}
{\epsfxsize 3cm\epsffile{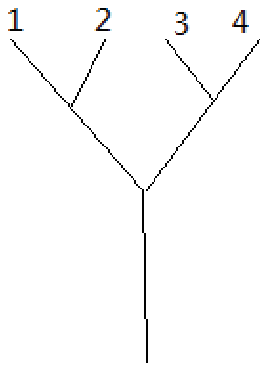}}
\end{array}
\end{align*}
\begin{align*}
&\langle 1\rangle\star\langle \langle 12\rangle 3\rangle=\langle
\langle \langle 12\rangle 3\rangle 4\rangle+\langle \langle 1\langle
23\rangle \rangle 4\rangle +\langle 1\langle \langle 23\rangle
4\rangle;
\end{align*}
\begin{align*}
\begin{array}{r}{\epsfxsize
0.7cm\epsffile{1.eps}}
\end{array}
\star
\begin{array}{r}
{\epsfxsize 3cm\epsffile{12-3.eps}}
\end{array}=\begin{array}{r}{\epsfxsize
3cm\epsffile{12-3-4.eps}}\end{array}+\begin{array}{r}{\epsfxsize
3cm\epsffile{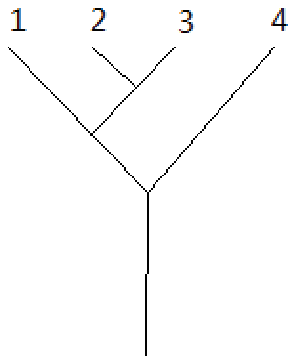}}\end{array}+\begin{array}{r}{\epsfxsize
3cm\epsffile{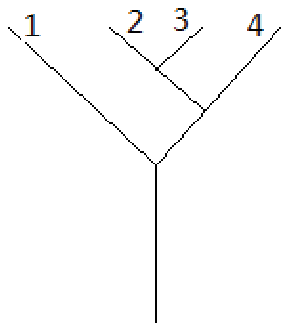}}\end{array}
\end{align*}
\noindent The above two expressions imply that the product $\star$ is non-commutative.
The non-commutate is the essence property of a quantum group.
The product expression for two simplest vertices is
\begin{align*}
\langle 12\rangle\star\langle12\rangle
=\langle \langle 12\rangle\langle 34\rangle\rangle+\langle
\langle \langle 12\rangle 3\rangle 4\rangle +\langle \langle
1\langle 23\rangle \rangle 4\rangle +\langle 1\langle 2\langle
34\rangle \rangle \rangle +\langle 1\langle \langle 23\rangle
4\rangle ;
\end{align*}
\begin{align*}
\begin{array}{r}
{\epsfxsize 2.5cm\epsffile{12.eps}}
\end{array}\star\begin{array}{r}{\epsfxsize
2.5cm\epsffile{12.eps}}
\end{array}=&\begin{array}{r}
{\epsfxsize 3cm\epsffile{12-34.eps}}
\end{array}+\begin{array}{r}{\epsfxsize
3cm\epsffile{12-3-4.eps}}
\end{array}+\begin{array}{r}
{\epsfxsize 3cm\epsffile{1-23-4.eps}}
\end{array}\\&+\begin{array}{r}{\epsfxsize
3cm\epsffile{1--23-4.eps}}
\end{array}+\begin{array}{r}{\epsfxsize
3cm\epsffile{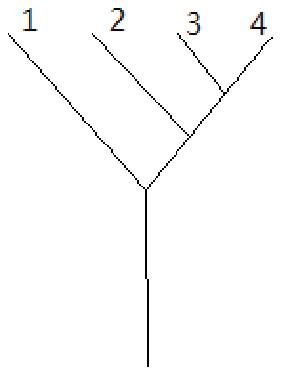}}
\end{array}
\end{align*}

\noindent It is easy to verify that the sums of leaves in two sides of the equations are identical, but the number of vertices is changed.

Even though the product $\star$ on $\mathbb{K}[X_n] $ and the
product $*$ on $\mathbb{K}[Y_n] $ seem to satisfy similar formulas. However, indeed the product $\star$ is very different from $*$. Of course, for the case $\mathbb{K}[Y_{\infty}]$, they are linked by the following identity

\begin{proposition}\label{proprelation}
For arbitrary two planar binary trees $\rho,\sigma\in \mathbb{K}[Y_{\infty}]$, the following identity is preserved
\be\label{prodrelation} \rho\star\sigma=\rho*(1)*\sigma, \ee
where $(1)$ is the planar binary tree with one trivalent vertex in permutation representation.
\end{proposition}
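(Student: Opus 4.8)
Since both $\star$ and $*$ are bilinear, it suffices to treat the case in which $\rho$ and $\sigma$ are single planar binary trees, and I would argue by induction on $n:=|\rho|+|\sigma|$. Write $\mathbf{1}:=\langle 1\rangle$ for the one-leaf tree (the two-sided unit for $*$, since $|*T=T=T*|$), write $a\vee b$ for $\langle ab\rangle$, and recall that $*$ is associative and that, for trees in $\mathbb{K}[Y_{\infty}]$, every branch occurring in a decomposition $\rho=\rho_{1}\vee\rho_{2}$ is again an honest tree (possibly $\mathbf{1}$, never $\emptyset$), so the recursive formula for $\star$ always applies. The relabelling prescription built into the definition of $\star$ is exactly what makes the bracket concatenation $\langle\rho\tau\rangle$ coincide with the grafting $\rho\vee\tau$ on the $*$-side, so throughout one only has to match underlying planar trees --- the leaf labels are then forced on both sides.

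Next I would record two one-step identities coming straight from the defining recursion $T*T'=T_{1}\vee(T_{2}*T')+(T*T'_{1})\vee T'_{2}$ of the Loday--Ronco product, applied with $(1)=\mathbf{1}\vee\mathbf{1}$: for $\rho=\rho_{1}\vee\rho_{2}$,
\[
\rho*(1)=(\rho\vee\mathbf{1})+\rho_{1}\vee(\rho_{2}*(1)),
\]
and for $\sigma=\sigma_{1}\vee\sigma_{2}$,
\[
(1)*\sigma=(\mathbf{1}\vee\sigma)+((1)*\sigma_{1})\vee\sigma_{2}.
\]
The base case $\rho=\sigma=\mathbf{1}$ is immediate: $\mathbf{1}\star\mathbf{1}=\langle\mathbf{1}\mathbf{1}\rangle=(1)=\mathbf{1}*(1)*\mathbf{1}$. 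If $\rho=\mathbf{1}$ and $\sigma=\sigma_{1}\vee\sigma_{2}$, the definition of $\star$ gives $\mathbf{1}\star\sigma=(\mathbf{1}\vee\sigma)+(\mathbf{1}\star\sigma_{1})\vee\sigma_{2}$; the induction hypothesis rewrites $\mathbf{1}\star\sigma_{1}$ as $\mathbf{1}*(1)*\sigma_{1}=(1)*\sigma_{1}$, and the second identity above then identifies the result with $(1)*\sigma=\mathbf{1}*(1)*\sigma$. The case $\sigma=\mathbf{1}$, $\rho=\rho_{1}\vee\rho_{2}$ is symmetric, using the first identity.

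The main case is $\rho=\rho_{1}\vee\rho_{2}$ and $\sigma=\sigma_{1}\vee\sigma_{2}$ with both nontrivial. The definition of $\star$ together with the induction hypothesis, applied to $\rho\star\sigma_{1}$ and $\rho_{2}\star\sigma$ (which have strictly smaller total degree, since $|\sigma_{1}|<|\sigma|$ and $|\rho_{2}|<|\rho|$, so the hypothesis also covers the sub-cases where some $\rho_{i}$ or $\sigma_{j}$ equals $\mathbf{1}$), yields
\[
\rho\star\sigma=(\rho\vee\sigma)+(\rho*(1)*\sigma_{1})\vee\sigma_{2}+\rho_{1}\vee(\rho_{2}*(1)*\sigma).
\]
To reproduce this from the right-hand side I would group $\rho*(1)*\sigma$ as $\rho*\bigl((1)*\sigma\bigr)$, substitute $(1)*\sigma=(\mathbf{1}\vee\sigma)+((1)*\sigma_{1})\vee\sigma_{2}$, and expand each of $\rho*(\mathbf{1}\vee\sigma)$ and $\rho*\bigl(((1)*\sigma_{1})\vee\sigma_{2}\bigr)$ by a single application of the $*$-recursion. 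From the first expansion one gets the term $(\rho*\mathbf{1})\vee\sigma=\rho\vee\sigma$; from the second one gets the term $\bigl(\rho*((1)*\sigma_{1})\bigr)\vee\sigma_{2}=(\rho*(1)*\sigma_{1})\vee\sigma_{2}$ (here associativity of $*$ is used); the two remaining terms are $\rho_{1}\vee\bigl(\rho_{2}*(\mathbf{1}\vee\sigma)\bigr)$ and $\rho_{1}\vee\bigl(\rho_{2}*(((1)*\sigma_{1})\vee\sigma_{2})\bigr)$, which by bilinearity of $*$ recombine into $\rho_{1}\vee\bigl(\rho_{2}*((1)*\sigma)\bigr)=\rho_{1}\vee(\rho_{2}*(1)*\sigma)$. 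Collecting these three surviving terms gives precisely the displayed expression for $\rho\star\sigma$, which closes the induction.

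The genuinely routine parts are the two one-step $*$-expansions and the leaf-label bookkeeping. The one step that needs care --- and the reason the argument works at all --- is the choice to associate $\rho*(1)*\sigma$ as $\rho*\bigl((1)*\sigma\bigr)$ rather than $\bigl(\rho*(1)\bigr)*\sigma$: this is what aligns the asymmetry of the $\star$-recursion (the whole of $\rho$ multiplied against a piece $\sigma_{1}$ of $\sigma$ in the middle term, a piece $\rho_{2}$ of $\rho$ multiplied against the whole of $\sigma$ in the last term) with the asymmetry $T_{1}\vee(T_{2}*T')+(T*T'_{1})\vee T'_{2}$ of the Loday--Ronco recursion. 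I expect this matching of the two asymmetries to be the only real obstacle; once the correct grouping is fixed, every term is forced.
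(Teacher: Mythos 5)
Your proof is correct and follows essentially the same route as the paper: induction on $|\rho|+|\sigma|$ with the same base and one-leaf cases, expanding $\rho*(1)*\sigma$ via the Loday--Ronco recursion and matching the three surviving terms $\rho\vee\sigma+(\rho*(1)*\sigma_1)\vee\sigma_2+\rho_1\vee(\rho_2*(1)*\sigma)$ against the $\star$-recursion through the induction hypothesis. The only quibble is your closing remark: the paper groups the product the other way, as $(\rho*(1))*\sigma$, expanding $\rho*(1)$ first and recombining two of the four resulting terms, so the grouping $\rho*\bigl((1)*\sigma\bigr)$ is convenient but not actually essential.
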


\begin{proof}
The proof is carried out inductively on the number of vertices $k=|\rho|+|\sigma|$.
If $k=0$, we have $\rho=|$ and $\sigma=|$, then the righthand side of equation (\ref{prodrelation}) is equal to the left one, either side is a planar tree with one
trivalent vertex.

If $k=1$, the cases are, either $\rho=|, \sigma=(1)$ or $\rho=(1),
\sigma=|$. It is straightforward to check that, in each case,
any handside of equation (\ref{prodrelation}) is identical
to $\langle \langle 12\rangle 3\rangle+\langle 1\langle 23\rangle
\rangle$.

Suppose the proposition is verified whenever $k<n$. Assume that
for element $\rho=\rho_1\vee\rho_2, \sigma=\sigma_1\vee\sigma_2$ are
arbitrary planar binary trees. If $k=n$, then the right side of the equation
(\ref{prodrelation}) is
\be\ba
\rho*(1)*\sigma=&((\rho*|)\vee|)*\sigma+(\rho_1\vee(\rho_2*(1)))*\sigma\\
=&((\rho\vee|)*\sigma_1)\vee\sigma_2+\rho\vee\sigma+\rho_1\vee(\rho_2*(1)*\sigma)\\
&+((\rho_1\vee(\rho_2*(1)))*\sigma_1)\vee\sigma_2\\
=&(\rho*(1)*\sigma_1)\vee\sigma_2+\rho\vee\sigma+\rho_1\vee(\rho_1*(1)*\sigma)\\
=&\langle \langle \rho\star\sigma_1\rangle \sigma_2\rangle
+\langle\rho\sigma \rangle+\langle \rho_1\langle
\rho_2\star\sigma\rangle \rangle \\
=&\rho\star\sigma. \nonumber\ea\ee
\noindent The first two equalities follow from (\ref{plaprod}), the fourth
equality is governed by the induction. Therefore, we have proved equation (\ref{prodrelation}).
\end{proof}

In the above proposition, $\rho=\emptyset$ or
$\sigma=\emptyset$ is not allowed, for that case there is no definition for $*$.
Now we assume that $\rho,\sigma,\tau$ are two any chosen planar binary trees in
$\mathbb{K}[Y_{\infty}]$. Due to the Proposition (\ref{proprelation}), one has \be\ba
\rho\star(\tau\star\sigma)=&\rho*(1)*(\tau*(1)*\sigma)\\
=&(\rho*(1)*\tau)*(1)*\sigma\\
=&(\rho\star\tau)\star\sigma\nonumber \ea\ee

\noindent If any one of $\rho,\, \sigma$ or $\tau$ is set to be $\emptyset$,
obviously $\rho\star(\tau\star\sigma)=(\rho\star\tau)\star\sigma$. In other
words, the product $\star$ is associative.

We define the unit map $u$ such that $u(1_{\mathbb{K}})=\emptyset$ and
$u(\lambda)=\lambda1_{\mathbb{K}}(\text{for any~}\lambda\in \mathbb{K})$.
By the definition, it is easy to check on $\mathbb{K}[X_{\infty}]$ that
\be \star\circ (u\otimes
\text{id})=\text{id}=\star\circ (\text{id}\otimes u). \ee
\noindent Obviously, $(\mathbb{K}[X_{\infty}],\star,u)$ is an associative algebra.

In order to match the definition of a Hopf algebra, it is required to introduce a coproduct operator on $\mathbb{K}[X_{\infty}]$.

\begin{definition}
For any element $\rho\in \mathbb{K}[X_{n}]$ on a planar binary
tree, assume that $N$ is the set of leaf labels. Then the coproduct of $\rho$ are held with respect to the following equation
\be\ba\label{Bracketcoprod}
\delta(\rho)=\sum_{J\subset  N}\rho_{j}(J)\otimes\rho_{n-j}(N/J), \ea\ee
where $\rho_{|J|}(J)$ is a subgraph with $j=|J|$ leaves labelled with $J$,
and $\rho_{n-j}(N/J)$ is a subgraph with $n-j$ leaves labelled with $N-J$.
These leaves keep on their relationships in $\rho$.
Particularly, $\delta(\emptyset)=\emptyset\otimes\emptyset$,
$\delta(\langle 1\rangle )=\emptyset\otimes\langle 1\rangle +\langle
1\rangle \otimes\emptyset$.
\end{definition}

It is evident that
\begin{align}
\{\rho_J|J\subset N\}=\{\rho_{N/J}|J\subset N\}.
\end{align}
\noindent It is clearly that, $\delta$ and $\Delta$, the definitions of two kinds coproduct are
different. Simple examples are arranged below to illustrate equation (\ref{Bracketcoprod}).

\begin{align*} &\delta\langle 12\rangle
=\emptyset\otimes\langle 12\rangle +\langle 1\rangle \otimes\langle
2\rangle+\langle 2\rangle\otimes\langle 1\rangle +\langle 12\rangle
\otimes\emptyset;\end{align*}
\noindent or in terms of graphs
\begin{align*}
\delta\begin{array}{r}
{\epsfxsize 2.5cm\epsffile{12.eps}}
\end{array}&=\emptyset\otimes\begin{array}{r}
{\epsfxsize 2.5cm\epsffile{12.eps}}
\end{array}+\begin{array}{r}
{\epsfxsize 0.6cm\epsffile{1.eps}}
\end{array}\otimes\begin{array}{r}
{\epsfxsize 0.6cm\epsffile{2.eps}}
\end{array}+\begin{array}{r}
{\epsfxsize 0.6cm\epsffile{2.eps}}
\end{array}\otimes\begin{array}{r}
{\epsfxsize 0.6cm\epsffile{1.eps}}
\end{array}\\ &+\begin{array}{r}
{\epsfxsize 2.5cm\epsffile{12.eps}}
\end{array}\otimes\emptyset
\end{align*}
\noindent For less trivial examples
\begin{align*}
\delta(\langle \langle 12\rangle 3\rangle )=&\emptyset\otimes\langle
\langle 12\rangle 3\rangle+\langle 1\rangle \otimes\langle 23\rangle+\langle 2\rangle \otimes\langle 13\rangle+\langle 3\rangle \otimes\langle 12\rangle
+\langle 12\rangle \otimes\langle 3\rangle\\ &+\langle 13\rangle \otimes\langle 2\rangle+\langle 23\rangle \otimes\langle 1\rangle +\langle \langle
12\rangle 3\rangle \otimes\emptyset;\\
\delta(\langle 1\langle 23\rangle \rangle )=&\emptyset\otimes\langle
1\langle 23\rangle \rangle+\langle 1\rangle \otimes\langle 23\rangle+\langle 2\rangle \otimes\langle 13\rangle+\langle 3\rangle \otimes\langle 12\rangle
+\langle 12\rangle \otimes\langle 3\rangle\\ &+\langle 13\rangle \otimes\langle 2\rangle+\langle 23\rangle \otimes\langle 1\rangle +\langle 1\langle
23\rangle \rangle\otimes\emptyset.
\end{align*}
\noindent and equivalently expressed through graphs
\begin{align*}
\delta\begin{array}{r}
{\epsfxsize 3cm\epsffile{12-3.eps}}
\end{array}&=\emptyset\otimes\begin{array}{r}
{\epsfxsize 3cm\epsffile{12-3.eps}}
\end{array}+\begin{array}{r}
{\epsfxsize 0.6cm\epsffile{1.eps}}
\end{array}\otimes\begin{array}{r}
{\epsfxsize 2.5cm\epsffile{23.eps}}
\end{array}+\begin{array}{r}
{\epsfxsize 0.6cm\epsffile{2.eps}}
\end{array}\otimes\begin{array}{r}
{\epsfxsize 2.5cm\epsffile{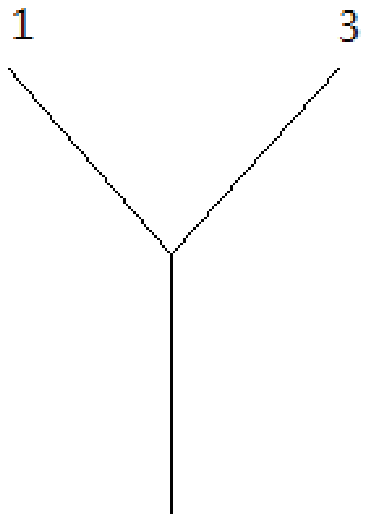}}
\end{array}\\ +&\begin{array}{r}
{\epsfxsize 0.6cm\epsffile{3.eps}}
\end{array}\otimes\begin{array}{r}
{\epsfxsize 2.5cm\epsffile{12.eps}}
\end{array}+\begin{array}{r}
{\epsfxsize 2.5cm\epsffile{12.eps}}
\end{array}\otimes\begin{array}{r}
{\epsfxsize 0.6cm\epsffile{3.eps}}
\end{array}+\begin{array}{r}
{\epsfxsize 2.5cm\epsffile{13.eps}}
\end{array}\otimes\begin{array}{r}
{\epsfxsize 0.6cm\epsffile{2.eps}}
\end{array}\\ +&\begin{array}{r}
{\epsfxsize 2.5cm\epsffile{23.eps}}
\end{array}\otimes\begin{array}{r}
{\epsfxsize 0.6cm\epsffile{1.eps}}
\end{array}+\begin{array}{r}
{\epsfxsize 3cm\epsffile{12-3.eps}}
\end{array}\otimes\emptyset\ \ \
\end{align*}

It is clear that the numbers of leaves on the left and right sides are intact for each equation, while the numbers of vertices are changed. Using the definition of the coproduct $\delta$, for any given tree $\rho$, one has
\begin{align*}
(\text{id}\otimes\delta)\circ\delta\rho(K)=&(\text{id}\otimes\delta)\circ\sum_{J\subset K}\rho_{|J|}(J)\otimes\rho_{k-|J|}(K/J)\\
=&\sum_{J\subset K}\rho_{|J|}(J)\otimes(\sum_{I\subset K/J}\rho_{|I|}(I)\otimes\rho_{k-|J|-|I|}(K/(J\cup I)))\\
=&\sum_{J\subset K}\sum_{I\subset K/J}\rho_{|J|}(J)\otimes\rho_{|I|}(I)\otimes\rho_{k-|J|-|I|}(K/(J\cup I))
\end{align*}
and
\begin{align*}
(\delta\otimes\text{id})\circ\delta\rho(K)=&(\delta\otimes\text{id})\circ\sum_{J\subset K}\rho_{|J|}(J)\otimes\rho_{k-|J|}(K/J)\\
=&\sum_{J\subset K}\sum_{I\subset J}\rho_{|I|}(I)\otimes\rho_{|J|-|I|}(J/I)\otimes\rho_{k-|J|}(K/J)
\end{align*}
\noindent Hence, for any chosen $\rho\in \mathbb{K}[X_{\infty}]$, the identity
\be\label{eq1}
(\text{id}\otimes\delta)\circ\delta(\rho)=(\delta\otimes\text{id})\circ\delta(\rho).
\ee
is held. Now, we can introduce a counit map $\varepsilon$ in $\mathbb{K}[X_{\infty}]$. For any element $\sigma\in \mathbb{K}[X_{\infty}]$, we set
\be\label{eq3}\varepsilon(\sigma)=
\begin{cases}1_{\mathbb{K}} &\text{if}\ \sigma=\emptyset\\
0  &\text{otherwise}.\end{cases}
\ee
By equation (\ref{eq3}), we can easily get $(\varepsilon\otimes\text{id})\circ
\delta=\text{id}=(\text{id}\otimes\varepsilon)\circ
\delta$. Therefore, $(\mathbb{K}[X_{\infty}],\delta,\varepsilon)$ is a counital coalgebra.

Now $\mathbb{K}[X_{\infty}]$ is an algebra as well as a coalgebra. For our target, the algebraic operators and the coalgebraic operators should form a bialgebra.

\begin{proposition}
$(\mathbb{K}[X_{\infty}],\star, u,
\delta,\varepsilon)$ is a bialgebra.
\end{proposition}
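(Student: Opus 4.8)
Given that the preceding discussion already establishes that $(\mathbb{K}[X_{\infty}],\star,u)$ is a unital associative algebra and that $(\mathbb{K}[X_{\infty}],\delta,\varepsilon)$ is a counital coassociative coalgebra, the plan is to reduce the statement to the bialgebra compatibility conditions and then verify them one at a time. Equip $\mathbb{K}[X_{\infty}]\otimes\mathbb{K}[X_{\infty}]$ with its standard tensor-product algebra structure $(a\otimes b)\star(c\otimes d)=(a\star c)\otimes(b\star d)$ and unit $\emptyset\otimes\emptyset$; then what must be shown is that $\delta(\emptyset)=\emptyset\otimes\emptyset$ and $\varepsilon(\emptyset)=1_{\mathbb{K}}$ (so that $\emptyset=u(1_{\mathbb{K}})$ is compatible with both structure maps), that $\varepsilon$ is multiplicative, and that $\delta$ is an algebra homomorphism. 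The first two are immediate from the defining formulas. For multiplicativity of $\varepsilon$: if either argument is $\emptyset$ the identity $\emptyset\star\sigma=\sigma=\sigma\star\emptyset$ settles it, and if $\rho,\tau\neq\emptyset$ a one-step induction on the recursion defining $\star$ shows that every tree occurring in $\rho\star\tau$ carries $|\rho|+|\tau|+1\ge 1$ trivalent vertices, hence is distinct from $\emptyset$, so $\varepsilon(\rho\star\tau)=0=\varepsilon(\rho)\varepsilon(\tau)$.

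The substance of the proposition is the identity $\delta(\rho\star\tau)=\delta(\rho)\star\delta(\tau)$, which I would prove by induction on the total number of trivalent vertices $|\rho|+|\tau|$. The base cases, in which each of $\rho,\tau$ equals $\emptyset$ or $\langle1\rangle$, follow from the explicit small values of $\star$ and $\delta$; the cases with an $\emptyset$ are purely formal, e.g. $\delta(\emptyset\star\tau)=\delta(\tau)=(\emptyset\otimes\emptyset)\star\delta(\tau)=\delta(\emptyset)\star\delta(\tau)$. The key auxiliary lemma, proved directly from the definition of $\delta$, is that $\delta$ intertwines grafting: for trees $\alpha,\beta$, with the convention that grafting against an empty branch is the identity and $\langle\emptyset\,\emptyset\rangle=\emptyset$,
\be
\label{graftdelta}
\delta\bigl(\langle\alpha\beta\rangle\bigr)=\sum\langle\alpha'\beta'\rangle\otimes\langle\alpha''\beta''\rangle ,\qquad \delta(\alpha)=\sum\alpha'\otimes\alpha'',\quad \delta(\beta)=\sum\beta'\otimes\beta'' ;
\ee
this is nothing but the remark that restricting $\langle\alpha\beta\rangle$ to a set of leaves meeting both branches produces the graft of the restrictions of $\alpha$ and $\beta$, whereas restricting to a subset contained in one branch returns the restriction of that branch alone. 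In the inductive step one writes $\rho=\langle\rho_1\rho_2\rangle$ and $\tau=\langle\tau_1\tau_2\rangle$ (the cases $\rho=\langle1\rangle$ or $\tau=\langle1\rangle$ being handled by the special formulas for $\langle1\rangle\star\sigma$ and $\sigma\star\langle1\rangle$), expands
\be
\rho\star\tau=\langle\rho\tau\rangle+\langle(\rho\star\tau_1)\tau_2\rangle+\langle\rho_1(\rho_2\star\tau)\rangle ,
\ee
applies $\delta$ to the three summands using (\ref{graftdelta}) together with the inductive hypotheses $\delta(\rho\star\tau_1)=\delta(\rho)\star\delta(\tau_1)$ and $\delta(\rho_2\star\tau)=\delta(\rho_2)\star\delta(\tau)$, and then reindexes each resulting double sum — again via (\ref{graftdelta}), now applied to $\delta(\langle\rho_1\rho_2\rangle)$ and $\delta(\langle\tau_1\tau_2\rangle)$ — so that it runs over a splitting $(\rho',\rho'')$ of $\rho$ and a splitting $(\tau',\tau'')$ of $\tau$. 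For each such fixed pair the three contributions recombine into exactly the three-term recursion for $\rho'\star\tau'$ in the first tensor factor and for $\rho''\star\tau''$ in the second, yielding $\delta(\rho\star\tau)=\sum(\rho'\star\tau')\otimes(\rho''\star\tau'')=\delta(\rho)\star\delta(\tau)$.

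The step I expect to be the main obstacle is this last bookkeeping. Because the recursion defining $\star$ is asymmetric in its two arguments and the grafting bracket $\langle\cdot\,\cdot\rangle$ is only a partial operation, the recombination into a clean three-term recursion is legitimate only when the branches involved are genuinely nonempty; whenever a splitting collapses one of $\rho_1,\rho_2,\tau_1,\tau_2$, or one of the grafted subtrees, to $\emptyset$ or to a single leaf $\langle1\rangle$, that contribution has to be set aside and matched against the low-degree values of $\star$ and $\delta$ rather than against the generic recursion. Organizing these degenerate configurations so that they are in exact correspondence on the two sides — and keeping track of the canonical left-to-right relabelling of leaves after the shift by the number of leaves of $\rho$ — is where the real care lies; with that in hand the remaining steps are formal, and assembling the verified compatibility conditions shows that $(\mathbb{K}[X_{\infty}],\star,u,\delta,\varepsilon)$ is a bialgebra.
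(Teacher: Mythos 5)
Your reduction to the compatibility conditions, your treatment of $\varepsilon$ and of the unit, and your grafting lemma for $\delta$ are all fine, and your overall strategy --- expand $\rho\star\tau$ by the three-term recursion, apply $\delta$, and match against $(\star\otimes\star)\circ\tau_{2,3}\circ(\delta\otimes\delta)$ --- is the same direct verification that the paper carries out for equation (\ref{eq4}). The problem is the step you defer as ``bookkeeping'': the claim that, for each fixed splitting of $\rho$ and of $\tau$, the three contributions recombine into the recursion for $\rho'\star\tau'$ and $\rho''\star\tau''$. That step does not merely require care; it fails, because with $\star$ and $\delta$ as defined the two sides of the compatibility do not agree. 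Take $\rho=\langle 12\rangle$ and $\tau=\langle 1\rangle$ (relabelled $\langle 3\rangle$). Then $\rho\star\tau=\langle\langle 12\rangle 3\rangle+\langle 1\langle 23\rangle\rangle$, and since these two trees have identical coproduct terms apart from the extreme ones, $\delta(\rho\star\tau)$ contains each mixed term, e.g.\ $\langle 1\rangle\otimes\langle 23\rangle$, with coefficient $2$. On the other side, the only splitting of $\langle 12\rangle\otimes\langle 3\rangle$ that can put the single leaf $\langle 1\rangle$ in the first slot is $(\langle 1\rangle\otimes\langle 2\rangle)\otimes(\emptyset\otimes\langle 3\rangle)$, giving $(\langle 1\rangle\star\emptyset)\otimes(\langle 2\rangle\star\langle 3\rangle)=\langle 1\rangle\otimes\langle 23\rangle$ exactly once. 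So the coefficients are $2$ versus $1$ (equivalently: the left side has $2\cdot 2^{3}=16$ terms counted with multiplicity, the right side $10$), and the term-by-term recombination you describe cannot hold. The doubling comes precisely from the degenerate configurations you set aside: the summands $\langle\rho\tau\rangle$ and $\langle\rho_1\langle\rho_2\star\tau\rangle\rangle$ restrict to the same small trees, so their contributions pile up on the left while the right side produces each such tensor only once.

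So the gap is genuine: the induction cannot close with the operations as literally defined, and any repair requires changing something --- weighting the coproduct terms, restricting the admissible splittings, or otherwise modifying the structure --- before multiplicativity of $\delta$ can be asserted. Be aware also that the paper's own argument will not supply the missing details: it expands both sides exactly as you do and then simply asserts the final identification, without confronting these multiplicities, which is precisely the point at which the computation above shows a discrepancy.
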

\begin{proof}
We just need to prove $\star$ is a morphism of coalgebra
\be\label{eq4}
\delta\circ\star=(\star\otimes\star)\circ\tau_{2,3}\circ(\delta\otimes\delta)
\ee
and
\be\label{eq5}
\varepsilon\otimes\varepsilon=\star\circ(\varepsilon\otimes\varepsilon)=\varepsilon\circ\star.
\ee
By using of the definition of $\varepsilon$, equation (\ref{eq5}) is obvious. Now we pay attention on equation (\ref{eq4}). Let $K,L$ be sets of labels and $|K|=k,|L|=l$. For any two picked out trees $\rho(K)=\langle\rho_a\rho_b\rangle,\sigma(L)=\langle\sigma_a\sigma_b\rangle$ in $ X_{\infty}$, the left side of equation (\ref{eq4}) is
\begin{align*}
\delta(\rho\star\sigma)=\delta\langle\rho\sigma\rangle+\delta\langle\rho_a(\rho_b\star\sigma)\rangle+\delta\langle(\rho\star\sigma_a)\sigma_b\rangle.
\end{align*}
The right side of equation (\ref{eq4}) is equal to
\begin{align*}
&\star\otimes\star\circ\tau_{23}\circ\delta\otimes\delta(\rho\otimes\sigma)\\ =&\star\otimes\star\circ\tau_{23}\circ(\sum_{I\subset K}\rho_{|I|}\otimes\rho_{k-|I|})\otimes(\sum_{J\subset L}\sigma_{|J|}\otimes\sigma_{l-|J|})\\ =&\sum_{I\subset K}\sum_{J\subset L}\star\otimes\star\circ(\rho_{|I|}\otimes\sigma_{|J|})\otimes(\rho_{k-|I|}\otimes\sigma_{l-|J|})\\ =&\sum_{I\subset K}\sum_{J\subset L}[\langle\rho_{|I|}\sigma_{|J|}\rangle+\langle\rho_{a|I_{1}|}\langle\rho_{b|I_{2}|}\sigma_{|J|}\rangle\rangle+\langle\langle\rho_{|I|}\sigma_{a|J_{1}|}\rangle\sigma_{b|J_{2}|}]\otimes\\
&[\langle\rho_{k-|I|}\sigma_{l-|J|}\rangle+\langle\rho_{a|I_{3}|}\langle\rho_{b|I_{4}|}\sigma_{|J|}\rangle\rangle+\langle\langle\rho_{|I|}\sigma_{a|J_{3}|}\rangle\sigma_{b|J_{4}|}]\\
=&\delta[\langle\rho\sigma\rangle+\langle\rho_a(\rho_b\star\sigma)\rangle+\langle(\rho\star\sigma_a)\sigma_b\rangle].
\end{align*}
where $I_{1}\cup I_{2}=I$; $I_{3}\cup I_{4}=K/I$; $J_{1}\cup J_{2}=J$; $J_{3}\cup J_{4}=L/J$. That equation (\ref{eq4}) is held.

\end{proof}

In fact, $\mathbb{K}[X_{\infty}]$ is a graded bialgebra. Using the definition, we know $\mathbb{K}[X_{\infty}]=\sum_{n\in
N_0}L^n$, where $L^0=\{\emptyset\}, L^1=\{\langle 1\rangle \}, L^2=X_1,\cdots,
L^n=X_{n-1}\cdots$. Since $L^0$ is one dimensional vector space generated by $\emptyset$, then $\mathbb{K}[X_{\infty}]$ is a connected bialgebra. In term of another word, $(\mathbb{K}[X_{\infty}],\star, u, \delta,\varepsilon)$ is a Hopf algebra.

\subsection{Hopf algebra for graphs with loops}

In this subsection, we derive the Hopf algebra on graphs with loops. With an action of the contraction operators, a planar binary tree generates planar binary graph with loops.
The new Hopf algebra introduced at moment on planar binary trees can be generalized
to graphs with loops.

Let $\bar X_n$ mark the set of graphs obtained from $X_n$ by making all possible contractions $(_{i}\leftrightarrow_{i+1})$ on near leaves. It is obvious that $\bar X_0=X_0$,
$\bar X_1=X_1$, $X_n\subset \bar X_n(n>1)$. And mark again $\bar X^g_{n-2g}$ the set
of trees contracted all possible $g(g\leq \lfloor{n\over 2}\rfloor)$ pairs with the nearest
leaves of trees in $\bar X_n$. From the section 4, one has $\bar X^g_{n-2g}=Y_{n-2g-1}^g=\mathscr{G}_{n-2g-1}^g$. With the notations introduced, $\mathbb{K}[\bar X_n]$
is the vector space generated by $\bar X_n$ over $\mathbb{K}$. The disjoint union of $\bar X_n(n\geq 0)$ is $\bar X_{\infty}=\cup_{n\in N_0}\bar X_n$, and its vector space over $\mathbb{K}$ is $\mathbb{K}[\bar X_{\infty}]$. By the definition, $\mathbb{K}[ X_{\infty}]\subset\mathbb{K}[\bar X_{\infty}]$. Now we extend the structure $(\mathbb{K}[X_{\infty}],\star, u, \delta,\varepsilon)$ to $\mathbb{K}[\bar X_{\infty}]$.

Now we define the product and the unit map on $\mathbb{K}[\bar X_{\infty}]$.
The operator $\star$ on $\mathbb{K}[X_{\infty}]$ does not effect the adjacent
relationships between leaves on planar binary trees. In other words, the contraction
operator keeps the adjacent positions invariant. Hence the contraction is commutative
with the product $\star$. Therefore, we can define the product and the unit map on $\mathbb{K}[\bar X_{\infty}]$ as following.
\begin{align}\label{eq6}
\star|_{\mathbb{K}[\bar X_{\infty}]}=\sum_{i}(_{i}\leftrightarrow_{i+1})\circ \star|_{\mathbb{K}[X_{\infty}]};\nonumber \\
u|_{\mathbb{K}[\bar X_{\infty}]}=\sum_{i}(_{i}\leftrightarrow_{i+1})\circ u|_{\mathbb{K}[X_{\infty}]}.
\end{align}
Evidently, $(\mathbb{K}[\bar X_{\infty}],\star, u)$ is a unital algebra as expected.

Following the process to put the product operator on $\mathbb{K}[\bar X_{\infty}]$, we can  introduce the coproduct operator and the counit map by their commutation with the contraction operator.
\begin{align}\label{eq7}
\delta|_{\mathbb{K}[\bar X_{\infty}]}=\sum_{i}(_{i}\leftrightarrow_{i+1})\circ \delta|_{\mathbb{K}[X_{\infty}]};\nonumber \\
\varepsilon|_{\mathbb{K}[\bar X_{\infty}]}=\sum_{i}(_{i}\leftrightarrow_{i+1})\circ \varepsilon|_{\mathbb{K}[X_{\infty}]}.
\end{align}
It gives $\delta(\emptyset)=\emptyset\otimes\emptyset$, $\delta\langle1\rangle=\emptyset\otimes\langle1\rangle+\langle1\rangle\otimes\emptyset$.

Given any particular planar binary graph with loops, we can realize its coproduct
following a  simple procedure. As an example, now we carry out $\delta\langle\overline{12}\rangle$. The first step, we deal with the coproduct on
the tree $\langle12\rangle$,
$$\delta\langle12\rangle=\emptyset\otimes\langle12\rangle+\langle1\rangle\otimes\langle2\rangle+\langle2\rangle\otimes\langle1\rangle+\langle12\rangle\otimes\emptyset.$$
Then, we perform the same contraction as on the primary tree, that links leaves $1,2$
by a line. In another word, we have
\begin{align}
\delta[\langle\overline{12}\rangle]=\delta[\langle\overline{11}\rangle]=&
\emptyset\otimes\langle\overline{11}\rangle+\overline{\langle1\rangle\otimes\langle1\rangle}+\overline{\langle1\rangle\otimes\langle1\rangle}+\langle\overline{11}\rangle\otimes\emptyset,\nonumber
\end{align}

\noindent where two leaves with the same label under a bar imply that they are
joined by a line to make a loop. Of course, it is trivial that $\overline{\langle1\rangle\otimes\langle1\rangle}=\langle1\rangle\otimes\emptyset$. The graphic
representation for the coproduct on the simplest diagram with one loop is

\begin{align*}
\delta\begin{array}{r}
{\epsfxsize 2.5cm\epsffile{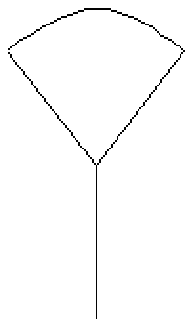}}
\end{array}=\emptyset\otimes\begin{array}{r}
{\epsfxsize 2.5cm\epsffile{0.eps}}
\end{array}+\begin{array}{r}
{\epsfxsize 0.7cm\epsffile{1.eps}}
\end{array}\otimes\emptyset+\begin{array}{r}
{\epsfxsize 0.7cm\epsffile{1.eps}}
\end{array}\otimes\emptyset+\begin{array}{r}
{\epsfxsize 2.5cm\epsffile{0.eps}}
\end{array}\otimes\emptyset
\end{align*}

\noindent Noe we deal with $\delta\langle1\langle\overline{23}\rangle\rangle$, a less
simpler example. At first, we expand the coproduct of the tree with no contractions.

\begin{align}
\delta(\langle 1\langle 23\rangle \rangle )=&\emptyset\otimes\langle
1\langle 23\rangle \rangle+\langle 1\rangle \otimes\langle 23\rangle+\langle 2\rangle \otimes\langle 13\rangle+\langle 3\rangle \otimes\langle 12\rangle
+\langle 12\rangle \otimes\langle 3\rangle\\ &+\langle 13\rangle \otimes\langle 2\rangle+\langle 23\rangle \otimes\langle 1\rangle +\langle 1\langle
23\rangle \rangle\otimes\emptyset.\nonumber
\end{align}

\noindent and,

\begin{align}
\delta\langle1\langle\overline{23}\rangle\rangle=&\emptyset\otimes\langle1\langle\overline{22}\rangle\rangle
+\langle1\rangle\otimes\langle\overline{22}\rangle+2\overline{\langle2\rangle\otimes\langle12\rangle}+2\langle1\overline{2\rangle\otimes\langle2\rangle}
+\langle\overline{22}\rangle\otimes\langle1\rangle \nonumber \\ &+\langle1\langle\overline{22}\rangle\rangle\otimes\emptyset,
\end{align}

\begin{align*}
&\delta\begin{array}{r}
{\epsfxsize 2.5cm\epsffile{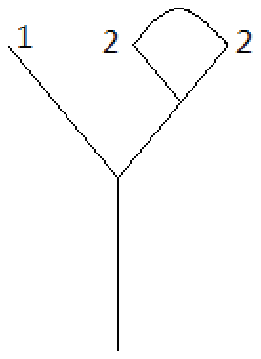}}
\end{array}=\emptyset\otimes\begin{array}{r}
{\epsfxsize 2.5cm\epsffile{1-0.eps}}
\end{array}+\begin{array}{r}
{\epsfxsize 0.7cm\epsffile{1.eps}}
\end{array}\otimes\begin{array}{r}
{\epsfxsize 2.5cm\epsffile{0.eps}}
\end{array}+4\begin{array}{r}
{\epsfxsize 2.5cm\epsffile{12.eps}}\nonumber
\end{array}
\end{align*}
\begin{align*}
\otimes\emptyset+\begin{array}{r}
{\epsfxsize 2.5cm\epsffile{0.eps}}
\end{array}\otimes\begin{array}{r}
{\epsfxsize 0.7cm\epsffile{1.eps}}
\end{array}+\begin{array}{r}
{\epsfxsize 2.5cm\epsffile{1-0.eps}}
\end{array}\otimes\emptyset
\end{align*}

\noindent It is easy to verify that $(\mathbb{K}[\bar X_{\infty}],\delta,\varepsilon)$ is a counital algebra. Next,  we will prove that it is a Hopf algebra.
\begin{proposition}\label{hopfproof}
$(\mathbb{K}[\bar X_{\infty}],\star, u, \delta, \varepsilon)$ is a Hopf algebra.
\end{proposition}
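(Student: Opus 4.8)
The plan is to show that $(\mathbb{K}[\bar X_\infty],\star,u,\delta,\varepsilon)$ inherits its bialgebra structure from $(\mathbb{K}[X_\infty],\star,u,\delta,\varepsilon)$ through the contraction operators $c:=\sum_i(_i\leftrightarrow_{i+1})$, and then invoke the standard fact that a connected graded bialgebra is automatically a Hopf algebra. The key structural input is that contractions only act on the adjacency data of leaves, while $\star$ and $\delta$ respect adjacency; this is exactly the commutation property already used in equations~(\ref{eq6}) and~(\ref{eq7}) to \emph{define} the operators on $\mathbb{K}[\bar X_\infty]$. So the first step is to record carefully that all the bialgebra axioms on $\mathbb{K}[\bar X_\infty]$ follow by applying the corresponding identity on $\mathbb{K}[X_\infty]$ and then post-composing with contractions: associativity of $\star$, coassociativity~(\ref{eq1}), the counit axioms, the unit axioms, and the crucial compatibility~(\ref{eq4}) that $\delta\circ\star=(\star\otimes\star)\circ\tau_{2,3}\circ(\delta\otimes\delta)$. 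Each of these is already proven on $\mathbb{K}[X_\infty]$ in the previous subsection, and because the contraction on a product (resp. coproduct) of graphs distributes over the two tensor factors in the way indicated by the bracket/bar notation, the contracted identities hold on $\mathbb{K}[\bar X_\infty]$.

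Next I would verify the grading and connectedness. Assign to a graph in $\bar X^g_{n-2g}$ a degree determined by the number of leaves before contraction (equivalently, the degree is inherited from $X_n$ under $\bar X_n\supset X_n$ and preserved by contractions, since a contraction $(_i\leftrightarrow_{i+1})$ sends $\bar X_n$ into the part of $\bar X_{n-?}$ that we keep indexed by the original $n$ — this is the point of writing $\bar X^g_{n-2g}$). Under this grading $\star$ raises degree in the controlled way described in subsection~5.2 (one more trivalent vertex, same number of leaves) and $\delta$ splits the degree additively, so $\mathbb{K}[\bar X_\infty]=\bigoplus_{n\ge 0}L^n$ with $L^0=\{\emptyset\}$ one-dimensional; hence $\mathbb{K}[\bar X_\infty]$ is connected graded. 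A connected graded bialgebra always admits a (unique) antipode, defined recursively by $S(\emptyset)=\emptyset$ and $S(x)=-x-\sum S(x_{(1)})\,x_{(2)}$ on the reduced coproduct; this finishes the proof. I would state this last fact with the same reference already cited in the paper for the $\mathbb{K}[S_\infty]$ case (\cite{MR}).

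The step I expect to be the real obstacle is making precise \emph{why} contractions commute with $\star$ and $\delta$ as operators at the level of the bialgebra identities — not merely for a single application, but inside nested expressions such as $(\delta\otimes\mathrm{id})\circ\delta$ or $(\star\otimes\star)\circ\tau_{2,3}\circ(\delta\otimes\delta)$. Concretely, when $\delta$ splits a tree into a subgraph on leaf set $J$ and one on $N\setminus J$, a contraction $(_i\leftrightarrow_{i+1})$ of adjacent leaves $i,i+1$ only acts on whichever tensor factor contains \emph{both} of them; if $i$ and $i+1$ land in different factors, the ``contraction'' degenerates (as in $\overline{\langle1\rangle\otimes\langle1\rangle}=\langle1\rangle\otimes\emptyset$ in the worked examples). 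One must check that this behavior is exactly what is forced by the definitions~(\ref{eq6}),~(\ref{eq7}) and is consistent on both sides of~(\ref{eq4}); once that bookkeeping is set up as a lemma ($c$ is an algebra and coalgebra map in the appropriate extended sense, equivalently $\mathbb{K}[\bar X_\infty]$ is the image of $\mathbb{K}[X_\infty]$ under an idempotent-like family of maps respecting all structure), everything else is formal. The remaining verifications are the routine diagram-chases that the paper has already carried out in the loop-free case, now transported verbatim.
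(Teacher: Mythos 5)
Your proposal follows essentially the same route as the paper: define the operations on $\mathbb{K}[\bar X_{\infty}]$ by commuting the contraction operators $(_{i}\leftrightarrow_{i+1})$ past $\star$ and $\delta$, reduce the bialgebra axioms (in particular the compatibility (\ref{eq4})) to the already-proved identities on $\mathbb{K}[X_{\infty}]$, and conclude via gradedness and connectedness that the bialgebra is automatically a Hopf algebra. Your explicit flagging of the bookkeeping when a contraction straddles the two tensor factors of the coproduct is a point the paper glosses over, but it does not change the argument's structure.
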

\begin{proof}
At first, we prove $(\mathbb{K}[\bar X_{\infty}],\star, u, \delta, \varepsilon)$ forms a bialgebra.
As we have announced $\mathbb{K}[\bar X_{\infty}]$ is graded, as well as connected,
in the other word, $(\mathbb{K}[\bar X_{\infty}],\star, u, \delta, \varepsilon)$ is a Hopf
algebra. Hence, it is sufficient to prove that equation (\ref{eq4}) is made sense on $\mathbb{K}[\bar X_{\infty}]$.

For the leaves are unlabelled, the operator $(_{i}\leftrightarrow_{i+1})$ is commutative with the permutations $\tau_{23}$. For any given two trees $\bar\rho,\bar\sigma\in \mathbb{K}[\bar X_{\infty}]$, and the corresponding primary trees $\rho=(\rho_1\rho_2),\sigma=(\sigma_1\sigma_2)\in \mathbb{K}[X_{\infty}]$.
\begin{align}
\delta\langle\bar\rho\star\bar\sigma\rangle
=&\delta\circ(\sum_{\rho,\delta}(_{i}\leftrightarrow_{i+1}))\circ(\rho\star\delta)\nonumber\\
=&\sum_{\rho,\delta}(_{i}\leftrightarrow_{i+1})\circ\delta(\rho\star\delta)\nonumber
\end{align}
On the other hand
\begin{align*}
\star\otimes\star\circ\tau_{23}\circ\delta\otimes\delta(\bar\rho\otimes\bar\sigma)=&
\star\otimes\star\circ\tau_{23}\circ\delta\otimes\delta(\sum_{\rho,\delta}(_{i}\leftrightarrow_{i+1}))\circ(\rho\star\delta)\\
=&\sum_{\rho,\delta}(_{i}\leftrightarrow_{i+1})\circ
\star\otimes\star\circ\tau_{23}\circ\delta\otimes\delta(\rho\otimes\sigma)
\end{align*}
Using the equality (\ref{eq4}), one gets
\begin{align*}
\delta\langle\bar\rho\star\bar\sigma\rangle=
\star\otimes\star\circ\tau_{23}\circ\delta\otimes\delta(\bar\rho\otimes\bar\sigma).
\end{align*}
\end{proof}
\noindent Immediately, using the structure on on $\mathbb{K}[\bar X_{\infty}]$, we can dress the
Hopf algebra structure on the topological recursion by using the weighted map $\phi$.
The Hopf algebra on diagrammatic representations of the correlation functions is one of the results of the Hopf algebra structure on $\mathbb{K}[\bar X_{\infty}]$.

With the help of Hopf algebra on $\mathbb{K}[\bar X_{\infty}]$, immediately, as a corollary, we achieve a more concise result in contrast to that in  \cite{Est}. The formula can be expressed
as the following theorem.

\begin{corollary}
The diagrammatic representation of the correlation function $W_k^g(p,p_1,\cdots,p_{k-1})$ is given by summing over all permutations $k-1$ labelled leaves for graphs in $(X_{k-1})^g$,
\begin{align}\label{equ1}
\phi^{-1} (W_k^g(p,p_1,\cdots,p_{k-1}))=\sum_{\scriptstyle t^g\in
(X_{k-1})^g\atop \scriptstyle \text{perm. $\{p_1,\cdots,
p_{k-1}\}$}}t^g,
\end{align}
in which $(X_{k-1})^g$ is the set of graphs obtained by the product $\star$ of $(2g+k-1)$ $\langle1\rangle$ with all possible $g$ pairs of the nearest leaves identifications,
\begin{align}\label{equ2}
X_{k-1}^g=\{\underbrace{\langle1\rangle\star\langle1\rangle\star\cdots
\star\langle1\rangle}_{\mathrm{2g+k-1}}
|\text{all possible contractions of $g$ pairs of the nearest leaves}\}.
\end{align}
\end{corollary}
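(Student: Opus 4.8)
The plan is to establish the Corollary by combining two ingredients already developed in the paper: the equivalence of graph families proved in Section 4, and the non-recursive presentation of the product $\star$ obtained from Proposition \ref{proprelation}. First I would prove the statement for the tree case, i.e.\ genus $g=0$, and then lift it to the loop case by applying the contraction operators. The key observation is that Proposition \ref{proprelation} gives $\rho\star\sigma=\rho*(1)*\sigma$ on $\mathbb{K}[Y_\infty]$, so by associativity of $\star$ an iterated product of $m=2g+k-1$ copies of $\langle1\rangle$ equals $\underbrace{(1)*(1)*\cdots*(1)}_{m}$ in the Loday--Ronco product (with the convention $\langle1\rangle=|$ and the single vertex tree $(1)=\langle12\rangle$ appearing between consecutive factors). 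Hence $X_{k-1}^{0}$ as defined in (\ref{equ2}) is literally the set of planar binary trees appearing in the $m$-fold Loday--Ronco product of the one-vertex tree; one then checks that this set, together with all relabelings of the $k-1$ leaves, is exactly $Y_{n}$ with $n=k-2$, which by the genus-zero equivalence (\ref{genus0tree}) is $\phi^{-1}(W_{k}^{0})$.

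The main step is therefore to verify that $\underbrace{(1)\star\cdots\star(1)}_{2g+k-1}$, expanded via the recursive formula in the Definition of $\star$, produces each planar binary tree in $Y_{k-2}$ exactly once. I would do this by induction on $m=2g+k-1$: the base cases $m=1$ ($\langle1\rangle$, the trivial tree) and $m=2$ ($\langle1\rangle\star\langle1\rangle=\langle\langle12\rangle3\rangle+\langle1\langle23\rangle\rangle$, the two trees with one vertex) are immediate from the examples already computed, and the inductive step uses the defining identity $\rho\star\langle1\rangle=\langle\rho1\rangle+\langle\rho_1\langle\rho_21\rangle\rangle$ (appending a new rightmost leaf in the two possible ways at the root), which is precisely the standard way of building $Y_{n}$ from $Y_{n-1}$ by grafting a leaf. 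The bijection with the diagrammatic representation then follows from the identifications $X_{n+1}=Y_n=\mathscr{G}_n$ recorded in Section 5.1.

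For the loop case $g>0$, I would invoke the definition $\bar X^{g}_{n-2g}=Y^{g}_{n-2g-1}=\mathscr{G}^{g}_{n-2g-1}$ together with the commutation of the contraction operators $(_{i}\leftrightarrow_{i+1})$ with $\star$ established in Section 5.3 (equations (\ref{eq6})--(\ref{eq7})). Since $\star$ preserves the adjacency of leaves, applying all possible $g$ distinct pairs of nearest-neighbour contractions to the tree-level sum $\underbrace{\langle1\rangle\star\cdots\star\langle1\rangle}_{2g+k-1}$ commutes with forming the product, so $X_{k-1}^{g}$ as defined in (\ref{equ2}) coincides with $(\bar X_{k-1})^{g}=\mathscr{G}_{k-1}^{g}$; combined with (\ref{highgenustree}) this yields (\ref{equ1}). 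The only subtlety here is bookkeeping of multiplicities: one must confirm that distinct contractions of the tree sum give distinct graphs (up to the symmetry already accounted for by the weighted map $\phi$), so that the right-hand side of (\ref{equ1}) is an honest sum over $(X_{k-1})^{g}$ with no hidden overcounting; this is exactly the content of the equivalence in Section 4, applied termwise.

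The hard part will be making the leaf-relabeling bookkeeping in the inductive step precise: the product $\star$ naturally appends a new \emph{rightmost} leaf at each stage, so to recover \emph{all} planar binary trees one relies on the ``perm.\ $\{p_1,\ldots,p_{k-1}\}$'' summation, and one has to check that the symmetrization over leaf labels compensates exactly for the fact that $\star$ builds trees in a fixed left-to-right order. I expect that, once the genus-zero bijection is pinned down in this way, the extension to loops is essentially formal, resting entirely on the commutation property and the already-proven $\bar X^{g}_{n-2g}=\mathscr{G}^{g}_{n-2g-1}$.
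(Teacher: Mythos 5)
Your overall route is the same as the paper's: identify $X^g_{k-1}$ with the contracted tree sets of Section 4 (Esteves' $(Y^{2g+k-2})^g$) via Proposition \ref{proprelation}, then invoke (\ref{genus0tree})--(\ref{highgenustree}) and commute the contractions $(_{i}\leftrightarrow_{i+1})$ with $\star$ for $g>0$; the paper does exactly this, except it does not re-derive the tree enumeration but simply cites Theorem 2 of \cite{Est} together with the identity $\rho\star\langle1\rangle=\rho*\langle12\rangle$, remarking only on the off-by-one in the number of factors. It is precisely in your attempt to make that enumeration self-contained that the argument breaks. First, the bookkeeping: $m$ copies of $\langle1\rangle$ under $\star$ give $m-1$ copies of $(1)$ under $*$ (since $\langle1\rangle=|$ is the unit for $*$ and one $(1)$ is inserted between consecutive factors), not $m$ as your underbrace claims; this is exactly the ``one more'' discrepancy the paper's proof points out, and it is what makes the vertex count come out to $2g+k-2$. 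Second, your base case is wrong: $\langle1\rangle\star\langle1\rangle=\langle12\rangle$, a single one-vertex tree; the sum $\langle\langle12\rangle3\rangle+\langle1\langle23\rangle\rangle$ is $\langle1\rangle^{\star3}$, and its two trees have two vertices each.

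More seriously, the inductive mechanism you propose is false as stated. The expansion of $\rho\star\langle1\rangle=\rho*(1)$ is not ``appending a new rightmost leaf in the two possible ways at the root'': the recursion grafts the new leaf at every node along the right spine of $\rho$, so the number of terms varies with the tree. Concretely, $\langle1\langle23\rangle\rangle\star\langle1\rangle=\langle\langle1\langle23\rangle\rangle4\rangle+\langle1\langle\langle23\rangle4\rangle\rangle+\langle1\langle2\langle34\rangle\rangle\rangle$ has three terms; with your two-term rule the right comb $\langle1\langle2\langle34\rangle\rangle\rangle$ is never produced, and the induction already fails at four leaves, yielding $4$ trees instead of $|Y_3|=5$. (The two-term display in the paper's Definition is to be read recursively, i.e.\ as $\langle\rho_1(\rho_2\star\langle1\rangle)\rangle$, consistently with Proposition \ref{proprelation}.) The correct statement behind your induction is that each tree of $Y_n$ arises exactly once by grafting the new rightmost leaf at some node of the right spine of a unique tree in $Y_{n-1}$ --- equivalently, that $(1)^{*n}=\sum_{t\in Y_n}t$, which follows cleanly from the shuffle picture ($(1)^{*n}$ is the sum of all of $S_n$, hence $\sum_t\psi^*(t)$). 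With that repair, and with the relabeling issue you flag handled as in (\ref{genus0tree}) (the permutation sum is part of the statement, not something the product must generate), your genus-zero step goes through, and your treatment of the loop case via the commutation (\ref{eq6})--(\ref{eq7}) and $\bar X^g_{n-2g}=\mathscr{G}^g_{n-2g-1}$ matches the paper.
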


\begin{proof}
It is easy to see that $X^g_{k-1}$ is identified with $(Y^{2g+k-2})^g$ in  \cite{Est}.
Using the result in  \cite{Est}, we know that equation (\ref{equ1}) is sured.
For any element $\rho\in\mathbb{K}[\bar X_{\infty}]$, it is straightforward
$$\rho\star\langle1\rangle=\rho*\langle12\rangle.$$
Hence, equation (\ref{equ2}) is followed directly the Theorem 2 in  \cite{Est}. It should be noticed that the number of $\langle1\rangle$ in equation (\ref{equ2}) is one more than the number of $(1)$ in Theorem 2 in  \cite{Est}.
\end{proof}

\subsection{Antipode}

For a graded and connected Hopf algebra, there is a canonical antipode $S$ governed
by the following expression:

\begin{align*}
\star\circ(S\otimes Id)\circ\delta=u\circ\varepsilon.
\end{align*}
For any element $\rho\in \mathbb{K}[X_{\infty}]$, we have
\begin{align}
S(\rho)=-\rho-\sum S(\rho_1)\star\rho_2,
\end{align}
where in the Sweedler notation
\begin{align*}
\delta\rho=\sum\rho_1\otimes\rho_2
\end{align*}
\noindent is assumed. It is easy to see that
\begin{align*}
&S(\emptyset)=\emptyset;\\
&S(\langle 1\rangle)=-\langle1\rangle;\\
&S(\langle 12\rangle)=\langle 21\rangle;\\
&S(\langle\langle 12\rangle3\rangle)=\langle1\langle 23\rangle\rangle-\langle\langle 32\rangle1\rangle-\langle3\langle 21\rangle\rangle.
\end{align*}
With the help of the identity
\begin{align*}
S\circ \star=\star\circ \tau \circ (S\otimes S),
\end{align*}
here the $\tau$ is a permutation. Therefore, we have
\begin{align}
S(\langle 1\rangle\star\langle2\rangle\star\cdots\star\langle n\rangle)=(-1)^{n}\langle n\rangle\star\langle n-1\rangle\star\cdots\star\langle1\rangle.
\end{align}
And then the weighted map $\phi^{-1*}S$ induced by the antipode on the vector space of
the correlation functions implies
\begin{align}
(\phi^{-1*}S)W^0_{n+1}(p,p_1,\cdots,p_n)=(-1)^nW^0_{n+1}(p,p_1,\cdots,p_n).
\end{align}
As the action $(_{i}\leftrightarrow_{i+1})$ is commutative with all of the the product $\star$,
the coproduct $\delta$, the unit map $u$ as well as the counit map $\varepsilon$. Hence, the contraction operator is also commutative with the antipode $S$.
As an example
\begin{align*}
S((_{1}\leftrightarrow_{2})\langle{12}\rangle)=(_{1}\leftrightarrow_{2})S(\langle{12}\rangle)=(_{1}\leftrightarrow_{2})\langle{21}\rangle.
\end{align*}
\noindent Therefore, we have
\begin{align*}
(\phi^{-1*}S)W^g_{n+1}(p,p_1,\cdots,p_n)=(-1)^{n+2g}W^g_{n+1}(p,p_1,\cdots,p_n).
\end{align*}

\subsection{Coproduct and topological recursion }

At first sight, it seems that the expression of coproduct $\delta$ defined on the tagged
graphs by Equations (\ref{Bracketcoprod}), (\ref{eq7}) and the topological recursion are
very different. The main discrepancy between them is that the numbers of their diagram representation are different. It looks like that this unmatched point cannot be eliminated.
However, after a careful inspection, In fact, they share numerous common properties.
The unmatched point will be removed by considering the symmetry of the the tagged
graphs. If we tie with coefficients to terms of coproduct for the tagged
graphs, the topological recursion can be very naturally reconstructed.

Given any graph $G_1\in \mathscr{G}_{k}^{g}(p,p_1,\dots,p_k) $, If we introduce $\delta'G_1=\delta G_1-\emptyset\otimes G_1-G_1\otimes\emptyset$. In other words,
\begin{align}{\label{changeproduct}}
\delta'G_1(p,K)=\sum_{I\subset K}\rho_{1I}\otimes\bar\rho_{1K/I},
\end{align}
where $K=\{p_1, p_2,\cdots, p_k\}$. For simplicity, we collect the terms with equal leaves in Equation (\ref{changeproduct}) and set $T_i\otimes\bar T_{k-i}=\sum_{I\subset K, |I|=i}\rho_I\otimes\bar \rho_{K/I}$ formally, then one has the following proposition.

\begin{proposition}{\label{protop1}}
If
\begin{align}
\sum_{G\in \mathscr{G}_{k}^{0}}\delta'G(p,K)=\sum_{i=1}^{k-1}T_{i}\otimes\bar T_{k-i},
\end{align}
then the following equation holds
\begin{align}
\phi(\sum_{i=1}^{i=k-1}a_iT_{i}\otimes\bar T_{k-i})=\mathop{\text{Res}}\limits_{q\rightarrow \bf {a}}K_2(p;q)\sum_{I\subset K}W_{|I|+1}(q,I)W_{k-i+1}(\bar q, K/I)=W_{k+1}(p,K),
\end{align}
where the coefficients $a_i={C_iC_{k-i-1}\over C_{k-1}\tbinom{k}{i}}$, and $C_i$ is the Catalan number.
\end{proposition}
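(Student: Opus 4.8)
The plan is to show that the combinatorially-defined left-hand side, once suitably weighted, collapses onto the right-hand side of the genus-zero topological recursion \eqref{defWk}. The strategy has three movements: (i) understand the multiplicity with which a given ``leaf-splitting'' term $T_i \otimes \bar T_{k-i}$ arises in $\sum_{G \in \mathscr{G}_k^0} \delta' G$; (ii) match the weighted sum, after inserting the correction coefficients $a_i$, term by term with the genus-zero recursion kernel; (iii) invoke \eqref{rules} and the genus-zero case of \eqref{genus0tree} to identify the result with $W_{k+1}(p,K)$.

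First I would fix a partition $I \subset K$ with $|I| = i$ and count how many pairs $(G, J)$ with $G \in \mathscr{G}_k^0$ and $\delta' G$ producing the term indexed by $I$ there are. By \eqref{genus0tree}, the number of planar binary trees with $n = k-1$ internal vertices is the Catalan number $C_{k-1}$, and after permuting leaf labels each term $W_{k+1}^0$ decomposes via grafting $t_1 \vee t_2$ with $|t_1| = |I| - 1$, $|t_2| = k - |I| - 1$ (the root carries $p$ and the split corresponds to the two branches at the root). The coproduct $\delta$ defined by \eqref{Bracketcoprod} sums over \emph{all} subsets of leaves, not just those compatible with the branch structure at the root; this overcounting/undercounting is exactly what the ratio $a_i = C_i C_{k-i-1} / \big(C_{k-1}\tbinom{k}{i}\big)$ is engineered to repair. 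Concretely, $C_i$ and $C_{k-i-1}$ count the internal structure of the two pieces, $\tbinom{k}{i}$ accounts for the number of ways the leaf labels $\{p, p_1, \dots, p_k\}$ (there are $k+1$ of them, but the root $p$ sits on a distinguished side) distribute, and $C_{k-1}$ normalizes against the total. So the first real task is a bookkeeping lemma: $\sum_{G \in \mathscr{G}_k^0} \delta' G$, restricted to a fixed leaf-content $(I, K/I)$, equals a known multiple of $W_{|I|+1}^0(\cdot, I) \otimes W_{k-|I|+1}^0(\cdot, K/I)$, and the multiple is precisely $\binom{k}{i} C_{k-1} / (C_i C_{k-i-1})$ so that multiplication by $a_i$ renormalizes it to $1$.

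Next, with the coefficients absorbed, I would apply $\phi$ termwise. By the definition of the weighted map, $\phi$ sends the grafting point at the root to the kernel $K(q,p)$ (after the rescaling encoded by $K_2(p;q)$, which is the factor $\tfrac12 \int_q^{\bar q} B / \omega$), sends the left branch evaluated at $q$ to $W_{|I|+1}^0(q, I)$ and the right branch evaluated at $\bar q$ to $W_{k-|I|+1}^0(\bar q, K/I)$, and then takes $\Res_{q \to \mathbf{a}}$ and sums over branch points. Since the genus is zero there is no $W^{(g-1)}_{k+2}(q,\bar q, P_K)$ contribution and no internal sum over $m$, so the recursion \eqref{defWk} reads exactly $W_{k+1}^{(0)}(p, P_K) = \Res_{q \to \mathbf{a}} K(q,p) \sum_{J \subset K} W_{|J|+1}^{(0)}(q, P_J) W_{k-|J|+1}^{(0)}(\bar q, P_{K/J})$, which is the middle expression in the proposition once the normalized weighted sum has been substituted. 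The third movement is then immediate: the right-hand side is $W_{k+1}(p,K)$ by \eqref{defWk}.

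The main obstacle, I expect, is step (i) — proving the precise combinatorial identity that the coproduct $\delta'$ summed over $\mathscr{G}_k^0$ produces each $(i)$-labelled term with exactly the multiplicity $\binom{k}{i} C_{k-1}/(C_i C_{k-i-1})$ (equivalently $1/a_i$). This is where the asymmetry between ``subsets of leaves'' (what $\delta$ sees) and ``branches at the root'' (what the recursion sees) must be reconciled, and it requires carefully tracking which subgraphs $\rho_I$ arising from $\delta$ are isomorphic, how the permutation-of-leaf-labels sum interacts with the symmetry factors already noted after \eqref{rules}, and that every genus-zero subgraph on a given leaf set is itself an element of $\mathscr{G}_{|I|-1}^0$ with the correct multiplicity. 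I would prove this by induction on $k$, using the recursive grafting formula \eqref{genus0tree} for the trees together with the recursive structure of $\delta$ inherited from \eqref{Bracketcoprod}; the base cases $k = 3$ (one internal vertex, $a_1 = a_2 = 1$) and $k=4$ can be checked by hand against the examples already displayed. The remaining steps are essentially just unwinding the definition of $\phi$ and quoting \eqref{defWk}, so they should present no difficulty.
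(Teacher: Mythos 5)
Your overall strategy coincides with the paper's: read $a_i$ as the reciprocal of the multiplicity with which each split type occurs in $\sum_{G\in\mathscr{G}_k^0}\delta' G$, use the symmetry of the labels in $K$ so that $\phi$ carries every coproduct term of a given type onto a term of the genus-zero recursion, then quote \eqref{rules} and the $g=0$ case of \eqref{defWk}. The paper, however, does not run an induction at all: it gets the multiplicity by a direct count borrowed from \cite{Be} --- there are $k!\,C_{k-1}$ labelled graphs in $\mathscr{G}_k^0$, each contributing exactly $\tbinom{k}{i}$ coproduct terms with $|I|=i$, while the product $\sum_{|I|=i}W_{|I|+1}(q,I)\,W_{k-|I|+1}(\bar q,K/I)$ is represented by $\tbinom{k}{i}\,i!\,C_{i-1}\,(k-i)!\,C_{k-i-1}$ diagrams --- and $a_i$ is simply the quotient of these two numbers.

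The genuine gap in your proposal is that this counting lemma, which is the entire content of the proposition, is only announced, and the anchor you give for your induction is numerically wrong. For $k=3$ the graphs of $\mathscr{G}_3^0$ have two trivalent vertices, not one (one vertex is the case $k=2$), and the coefficients are never $1$: the direct count gives $a_1=1/2$ at $k=2$ and $a_1=1/6$ at $k=3$; moreover the ratio is necessarily symmetric under $i\leftrightarrow k-i$, which shows that the structure constant for a piece with $i$ leaves is $C_{i-1}$ (as in the paper's own intermediate computation), not the $C_i$ you use when explaining what the numerator counts. Since the base cases fail as stated, your induction cannot start, and the off-by-one in the heuristic would propagate through it. If you replace step (i) by the direct count above (or repair the induction with the correct base values), the remaining movements of your argument --- absorbing $a_i$, applying $\phi$ termwise to produce the kernel and the two correlation factors, and specializing \eqref{defWk} to $g=0$ where the $W^{(g-1)}_{k+2}$ term is absent --- do go through exactly as in the paper.
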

\begin{proof}
As the labels in $K$ are symmetrical, the graphs in $T_{i}\bar T_{k-i}$ are also symmetrical with respect to labels in $K$. By the definition of the coproduct $\delta$, there are much more the tagged graphs in $T_{i}\bar T_{k-i}$ than the diagrammatic representations of the topological recursion $W_{|I|+1}(q,I)W_{k-i+1}(\bar q, K/I)$. However, due to the symmetry, if we define $a_i$ as the ratio of graphs in diagrammatic representations of the topological recursion $W_{|I|+1}(q,I)W_{k-i+1}(\bar q, K/I)$ to the tagged graphs in $T_{i}\bar T_{k-i}$. With the help of results in \cite{Be}, one has
\begin{align*}
a_i={\tbinom{k}{i}i!(k-i)!C_{i-1}C_{n-i-1}\over k!C_{k-1}\tbinom{k}{i}}={C_iC_{k-i-1}\over C_{k-1}\tbinom{k}{i}}.
\end{align*}
\end{proof}

Of course, the results can be extended to the correlation function. For arbitrary correlation function of order to genus $g$, we have similar results.
\begin{proposition}{\label{protop2}}
If
\begin{align}
\sum_{G\in \mathscr{G}_{k}^{g}}\delta'G(p,K)=\sum_{m=0}^g\sum_{i=1}^{i=k-1}T^m_{i}\otimes\bar T^{m-g}_{k-i}+T_{k+1}^{g-1},
\end{align}
then, the identity holds
\begin{align}
&\phi(\sum_{m=0}^g\sum_{i=1}^{i=k-1}a_i^mT^m_{i}\otimes\bar T^{m-g}_{k-i}+b_{k+1}^{g-1}T_{k+1}^{g-1})\nonumber\\=&\mathop{\text{Res}}\limits_{q\rightarrow \bf {a}}K_2(p;q)\sum_{m=0}^g\sum_{I\subset K}W^m_{|I|+1}(q,I)W^{m-g}_{k-|I|+1}(\bar q, K/I)+W_{k+2}^{g-1}(q,\bar q, K)\nonumber\\=&W^g_{k+1}(p,K),
\end{align}
where the coefficients $a^m_i={s_ms_{g-m}\tbinom{{3\over 2}(m-1)+i}{i}\tbinom{{3\over 2}(g-m-1)+k-i}{k-i}\over s_g\tbinom{k}{i}\tbinom{g}{m}\tbinom{{3\over 2}(g-1)+k}{k}}$, while $b_{k+1}^{g-1}={g\over 2^k(4^g-2^g)}$, and $C_i$ is the Catalan number. The parameters $s_m$ for $m\geq 1$ obey \cite{Be}:
\begin{align*}
s_1=1,\ \ \ \ s_m=2(3m-4)s_{m-1}+\sum_{n=1}^{m-1}s_ns_{m-n}.
\end{align*}
\end{proposition}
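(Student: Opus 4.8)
The plan is to verify the claimed identity by matching it, bidegree by bidegree, against the topological recursion (\ref{defWk}). I would split the right--hand side of (\ref{defWk}) into the separating parts, indexed by $(i,m)$ with $1\le i\le k-1$ and $0\le m\le g$, namely $\Res_{q\to a}K(q,p)\sum_{|J|=i}W_{i+1}^{(m)}(q,P_J)W_{k-i+1}^{(g-m)}(\bar q,P_{K/J})$, together with the single non--separating part $\Res_{q\to a}K(q,p)\,W_{k+2}^{(g-1)}(q,\bar q,P_K)$. It then suffices to prove, for each bidegree, that $\phi$ applied to $a_i^m\,T_i^m\otimes\bar T_{k-i}^{g-m}$ gives the corresponding separating part, and that $\phi$ applied to $b_{k+1}^{g-1}\,T_{k+1}^{g-1}$ gives the non--separating part. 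For $g=0$ this is exactly Proposition \ref{protop1}, which also supplies the base case of an induction on $g$; for $g\ge1$ the lower correlation functions occurring on the right are, by equation (\ref{equ1}) and the inductive hypothesis, already equal to $\phi$ of the relevant sums of tagged graphs, so the entire remaining content is the matching of \emph{multiplicities}.

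That matching proceeds along the same two lines as the proof of Proposition \ref{protop1}. First, $\phi$ is invariant under interchanging the left and right child at any trivalent vertex (as noted just after (\ref{rules})), and the labels of $K$ are mutually symmetric; consequently every tagged tensor factor appearing in $T_i^m\otimes\bar T_{k-i}^{g-m}$ is sent by $\phi$ --- after being glued with $K(q,p)$ and having the residue at the branch points taken --- to one and the \emph{same} symmetrized term $\Res_{q\to a}K(q,p)\sum_{|J|=i}W_{i+1}^{(m)}W_{k-i+1}^{(g-m)}$, and with equal weight. The difference between $\phi$ of the coproduct stratum and the corresponding term of (\ref{defWk}) is therefore a uniform overcount, which is cancelled by multiplying by the constant ratio $a_i^m$ of the number of genuine topological--recursion diagrams in $W_{i+1}^{(m)}(q,P_J)W_{k-i+1}^{(g-m)}(\bar q,P_{K/J})$ to the number of tagged tensors in $T_i^m\otimes\bar T_{k-i}^{g-m}$; the non--separating stratum is corrected in the same way by the constant $b_{k+1}^{g-1}$.

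It remains to compute these constants, and for this I would invoke Eynard's enumeration of the topological--recursion diagrams \cite{Be}, according to which the number of diagrams in $\mathscr{G}^{g}_{k}$ for a fixed labelling of the $k$ leaves is controlled by an expression of the form $s_g\binom{\frac32(g-1)+k}{k}$ (only ratios of such expressions enter, so I suppress a universal normalization). Summing $\delta'$ over $\mathscr{G}^g_k$ and extracting the bidegree $(i,m)$ part realizes every choice of an $i$--element block of leaves and, within it, a choice of $m$ of the $g$ loops, so the number of tagged tensors in $T_i^m\otimes\bar T_{k-i}^{g-m}$ is $\binom{k}{i}\binom{g}{m}$ times the labelled diagram count $k!\,s_g\binom{\frac32(g-1)+k}{k}$; on the topological--recursion side, choosing $J$, a labelling of each factor and a diagram for each factor gives $\binom{k}{i}\cdot i!\,s_m\binom{\frac32(m-1)+i}{i}\cdot(k-i)!\,s_{g-m}\binom{\frac32(g-m-1)+k-i}{k-i}$. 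Since $i!\,(k-i)!/k!=1/\binom{k}{i}$, the quotient is precisely the stated $a_i^m$. The non--separating constant is obtained analogously by counting the ways a genus--$g$ tagged graph can be cut along one of its loops --- the two resulting half--edges playing the roles of the marked points $q,\bar q$ of $W_{k+2}^{(g-1)}$ --- against the diagrams of $W_{k+2}^{(g-1)}$, the bookkeeping yielding $b_{k+1}^{g-1}=\frac{g}{2^k(4^g-2^g)}$; this is consistent with the recursion $s_1=1$, $s_m=2(3m-4)s_{m-1}+\sum_{n=1}^{m-1}s_ns_{m-n}$, whose two summands correspond respectively to the non--separating and the separating contributions.

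The hard part is precisely this enumeration. One must check with care that every tensor factor produced by $\delta'$ is a bona fide diagrammatic representation of a lower correlation function in the sense of (\ref{rules}) --- in particular that a loop split between the two sides of the coproduct reduces the genus correctly and that its two half--edges really do become $q$ and $\bar q$ --- and that the combined action of the left--right child swaps and of the permutations of the $K$--labels has \emph{constant} orbit size on every bidegree stratum, which is what licenses replacing the overcount by a single scalar $a_i^m$; in particular the apparently strong claim that the $g$ loops distribute over the $i$--block so as to contribute the clean factor $\binom{g}{m}$ must be justified from this symmetry. Determining $b_{k+1}^{g-1}$ exactly, including the factor $4^g-2^g$, is the most delicate point, since it amounts to counting how many distinct loop--cuts of genus--$g$ tagged graphs project onto a given genus--$(g-1)$ diagram.
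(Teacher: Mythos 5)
Your proposal follows essentially the same route as the paper's own (very brief) proof: both reduce the matter to the symmetry argument of Proposition \ref{protop1} and obtain $a_i^m$ and $b_{k+1}^{g-1}$ as ratios of Eynard's diagram counts from \cite{Be} (the factors $s_m\binom{\frac32(m-1)+i}{i}$, etc.) to the number of tagged tensors produced by the coproduct, and your ratio computation reproduces the paper's displayed fractions term by term. The enumerative fine points you flag as delicate (the clean $\binom{g}{m}$ factor and the non-separating constant) are not carried out in the paper either --- indeed the paper's own computation yields $1/(2^k(4^g-2^g))$, at odds with the stated $g/(2^k(4^g-2^g))$ --- so your treatment is at least as complete as the original.
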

\begin{proof}
Following the same process as that in the proposition (\ref{protop1}), we just list out
the ratio of graphs on diagrammatic representation of the topological recursion to the
tagged graphs with the coproduct.
\begin{align*}
a^m_i=&{\tbinom{k}{i}s_ms_{g-m}i!(k-i)!4^k\tbinom{{3\over 2}(m-1)+i}{i}\tbinom{{3\over 2}(g-m-1)+k-i}{k-i}\over s_gk!4^k\tbinom{k}{i}\tbinom{g}{m}\tbinom{{3\over 2}(g-1)+k}{k}}\\=&{s_ms_{g-m}\tbinom{{3\over 2}(m-1)+i}{i}\tbinom{{3\over 2}(g-m-1)+k-i}{k-i}\over s_g\tbinom{k}{i}\tbinom{g}{m}\tbinom{{3\over 2}(g-1)+k}{k}},
\end{align*}
in which, the numerator counts the graphs on diagrammatic representation of the topological recursion $\sum_{I\subset K}W^m_{|I|+1}(q,I)$ $\times W^{m-g}_{k-|I|+1}(\bar q, K/I)$, the denominator counts the tagged graphs in the coproduct of $T^g_k$.
Meanwhile
\begin{align*}
b_{k+1}^{g-1}=&{s_gk!4^k\tbinom{{3\over 2}(g-1)+k}{k}\over 2^k(4^g-2^g)s_gk!4^k\tbinom{{3\over 2}(g-1)+k}{k}}\\=&{1\over 2^k(4^g-2^g)}.
\end{align*}
By the actual meaning of the ratio, ${1\over a^m_i},  {1\over b_{k+1}^{g-1}}$ are integers. That is obvious for $b_{k+1}^{g-1}$.
\end{proof}

With these formula, we set up the concrete relationship between the Hopf algebraic
structure defined on the tagged graphs and the topological recursion. Of course,
the numbers of the tagged graphs are much more than the ones of the diagram represents
for the topological recursion. This implies that the diagram represents for the topological
recursion is very special case of the tagged graphs with a coproduct structure.
The tagged graphs could be involved with more mathematical information and structures.

\section{Conclusions and discussions}

In this article, we propose a new Hopf algebraic structure on the tagged graphs with or without loops, the topological recursion on a spectral curve is reconstructed by considering the symmetry
of the the tagged graphs. In certain sense, the topological recursion on an arbitrary algebraic
curve is a special case of the new version Hopf algebraic structure.

The formal hermitian matrix integrals are important models in mathematical physics, it is the
zero-dimensional quantum field theory. Hence, it is naturally to consider that the Hopf
algebraic structures on the Feynman graphs of quantum field theory obtained by
A. Connes and D. Kreimer. The Connes-Kreimer Hopf algebra originates on trees
diagrams as well, maybe it has intrinsic relationship with the Hopf algebra on topological recursion.

However, compared with the Connes-Kreimer Hopf algebra on the Feynman diagrams of quantum field theory, the operators here have completely different meaning for the
properties of the graphs. As we explore in this paper, the new Hopf algebraic structure on the tagged graphs with loops is very different from the Loday-Ronco Hopf algebra.
In a sense, the new Hopf algebra can be regarded as a dual version of the
Loday-Ronco Hopf algebra. While it is known that, the Loday-Ronco Hopf algebra
is isomorphic to the non-commutative Connes-Kreimer Hopf algebra \cite{AS}.
Like a quantum group structure, the non-commutate is the essence property for the
Hopf algebraic structure introduced in this paper. Therefore It is interesting to
investigate the relationships between the new Hopf algebra
and the non-commutative Connes-Kreimer Hopf algebra.

It is worth exploring the reasons of a Hopf algebra existence in the context of the topological recursions as well as the Feynman diagram perturbative expansion in quantum field theory, this is helpful to understand the phenomena of topological recursions. Whether the properties on topological recursions can be established in perturbative quantum field theories? There are a lot of problems remained to be clarified yet.

It needs to mention that, in this paper we limit to consider the topological recursion which is defined on algebraic curves with simple branch points. In this case, the recursion relations are represented by diagrams with trivalent vertices \cite{EO}. For more general topological recursions, in which there are vertices beyond three-valent vertices \cite{BHLMR}, maybe there is a Hopf algebraic structure, too. We hope to consider that problem in our future work.

\section*{Acknowledgments}
The financial supports from the National Natural Science Foundation
of China (NSFC, Grants 11375258 and 11775299) are gratefully acknowledged
from one of the author (Ding). Meng was supported by NSFC Tianyuan Special fund (Grant 11626222) and he would like to thank J. N. Esteves for the explanation of his work  \cite{Est}.

\end{document}